\documentclass[12pt]{article}

\usepackage[margin=1in]{geometry}

\usepackage{amsmath,amsthm,amssymb,mathtools}
\usepackage{bm}
\usepackage{graphicx}
\usepackage{color}
\usepackage{bbm}

\newtheorem{theorem}{Theorem}
\newtheorem{lemma}{Lemma}

\newtheorem{proposition}{Proposition}

\newcommand{\Z}{\mathbb{Z}}

\newcommand{\R}{\mathbb{R}}

\newcommand{\T}{\mathbb{T}}

\newcommand{\Cov}{\mathrm{Cov}}

\newcommand{\mcA}{{A}}

\newcommand{\mcM}{{M}}

\newcommand{\Mt}{\widetilde{M}}

\newcommand{\sbl}{[\,}
\newcommand{\sbr}{\,]}
\newcommand{\sqb}[1]{[\, #1 \,]}

\usepackage[round]{natbib}


\begin{document}

\begin{center}

{\LARGE Inverses of Mat\'ern Covariances on Grids}

\vspace{16pt}

{Joseph Guinness}
\vspace{8pt}

\textit{Cornell University, Department of Statistics and Data Science}
\vspace{16pt}

\textbf{Abstract}
\end{center}

We conduct a study of the aliased spectral densities of Mat\'ern
covariance functions on a regular grid of points, providing clarity on
the properties of a popular approximation based on stochastic partial
differential equations; while others have shown that it can
approximate the covariance function well, we find that it assigns too
much power at high frequencies and does not provide increasingly
accurate approximations to the inverse as the grid spacing goes to
zero, except in the one-dimensional exponential covariance case. We
provide numerical results to support our theory, and in a simulation
study, we investigate the implications for parameter estimation,
finding that the SPDE approximation tends to overestimate spatial
range parameters.

\section{Introduction}\label{introduction}

The Mat\'ern covariance between two points in $\R^d$ separated by lag $h \in \R^d$ is
\begin{align}
M \sqb{ h \,;\, \nu, d } = 
\frac{\sigma^2}{2^{\nu-1} \Gamma(\nu)}(\alpha \| h\|)^\nu K_\nu(\alpha \| h \|),
\end{align}
where $\sigma^2$ is a variance parameter, $\alpha$ is an inverse range
parameter, $\nu$ is a smoothness parameter, and $K_\nu$ is the
modified Bessel function of the second kind.
\cite{guttorp2006studies} provide a summary of its important
properties and a detailed discussion of its history.  
Our article presents a theoretical and numerical study of properties of the
spectral density of the Mat\'ern covariance when aliased to regular
grids of points in one and two dimensions. We apply our results to
study a popular approximation to the inverse of Mat\'ern covariance
matrices that is motivated by connections between the Mat\'ern covariance and
a class of stochastic partial differential equations (SPDEs)
\citep{lindgren2011explicit}.

Building on work by \cite{whittle1954stationary},
\cite{whittle1963stochastic}, and \cite{besag1981system},
\cite{lindgren2011explicit} proposed that the inverse of Mat\'ern
covariance matrices can be represented by sparse matrices whenever
$\nu + d/2$ is an integer, which is why our notation for $M$
includes $\nu$ and $d$. The resulting approximation is commonly
referred to as the SPDE approximation. In this paper, we use the terms
``SPDE approach'' and ``SPDE approximation'' to refer specificially to
the methods in \cite{lindgren2011explicit}. We investigate the sparsity
of Mat\'ern inverses and find that there is nothing particularly
special with regards to sparsity about the $d=1, \nu = 3/2$ case or
the $d=2, \nu = 1$ case, relative to other values of $\nu$. Further,
by studying the spectral densities implied by the SPDE approximation,
we show that the SPDE over-approximates power at the highest
frequencies by a factor of 3 in the $d=1$, $\nu = 3/2$ case and by as
much as a factor of 2.7 in the $d=2$, $\nu = 1$ case. 

In the discussion of \cite{lindgren2011explicit}, Lee and Kaufman
noted that the likelihood implied by the SPDE approximation
overestimates spatial range parameters. The discussion of the bias was
centered on boundary effects. Though boundary effects are important
for approximations to the inverse covariance matrix, the present paper
suggests instead that the overestimation stems from the fact that the
SPDE approximation has too much power at the highest frequencies,
causing the likelihood to select a larger range parameter to
compensate. Section \ref{numerical_section} of the present paper
contains a simulation study that corroborates Lee and Kaufman's
results. Our results also suggest an explanation for why
\cite{guinness2018permutation} found that SPDE approximations were
less accurate in terms of KL-divergence than Vecchia's approximation
\citep{vecchia1988estimation}.

To set the stage, consider a Mat\'ern covariance matrix $\Sigma$ for a
large grid of points in one dimension with spacing 1, ordered left to
right. Then let $Q = \Sigma^{-1}$, and consider the values
$Q_{i,i+h}/Q_{i,i}$, where $i$ is the index for a location near the
center of the domain. Figure 1 plots these values for $h = 1, 2, 3$,
and $4$ over a range of smoothness and inverse range parameters. When
$\nu = 0.5$, the values are zero for $h > 1$, and they generally
appear to converge as the inverse range decreases, but not to zero
when $\nu = 3/2$, which is the approximation used in the SPDE
approach. The rest of the present paper aims to explore
properties of the Mat\'ern model and the SPDE approximation, with an
aim of understanding these numerical results.

\begin{figure}
  \centering
  \includegraphics[width=\textwidth]{./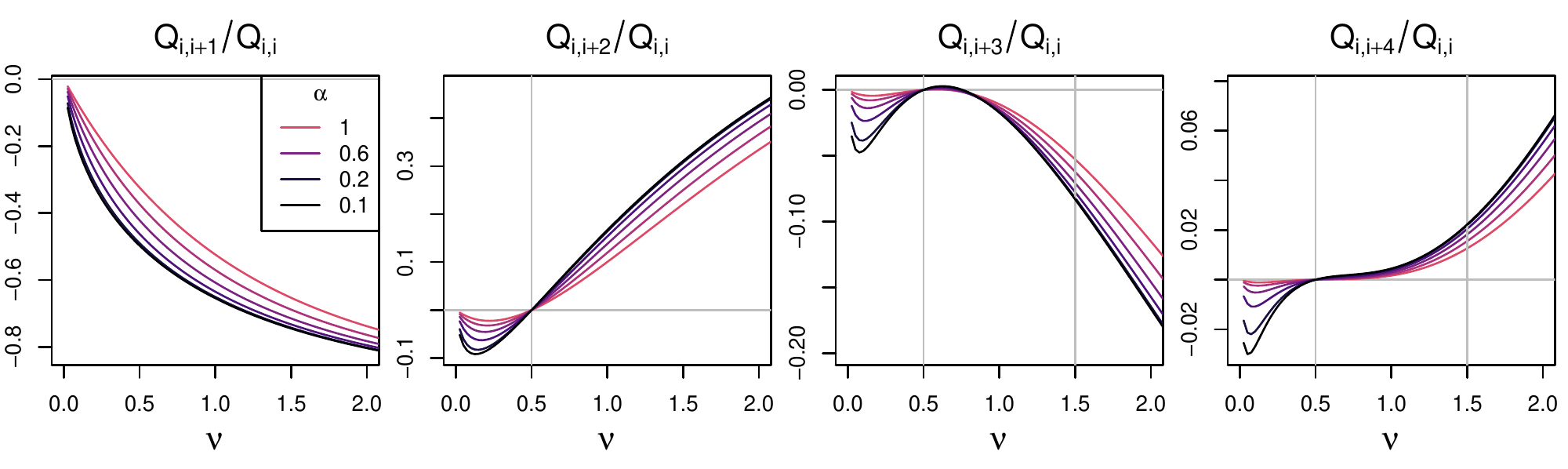}
  \caption{\label{Q_vs_nu_1d_intro} For dimension $d=1$, $Q_{i,i+h}/Q_{i,i}$
    as a function of $\nu$ for various values of $h$ and several
    inverse range parameters $\alpha$. Vertical lines indicate values
    set to zero in the SPDE approximation. }
\end{figure}

Section \ref{background} contains a general background on spectral
theory for stationary random fields on grids of points. Section
\ref{matern} provides a theoretical study of spectral properties of
Mat\'ern covariances in particular, and of the SPDE
approximation. Section \ref{numerical_section} includes numerical and
simulation results, and Section \ref{discussion} concludes with a
discussion. The appendix has an extended background and proofs of
theorems.

\section{Background}\label{background}

The details for all derivations in this section are spelled out in
Appendix \ref{spectral_appendix}.  Let $Y : \R^d \to \R$ be a
stationary process with autocovariance function
$A\sbl h \sbr = \Cov \{ Y\sqb{x+h},Y\sqb{x} \}$. Due to Bochner's
theorem \citep[cf.][]{stein1999interpolation}, $A\sbl \sbr$ is
positive definite when
\begin{align}
\int_{\R^d} A \sbl h \sbr e^{-i 2\pi \omega \cdot h} dh
 \coloneqq  {A}(\omega) > 0 \quad \mbox{for all } \omega \in \R^d.
\end{align}
We call $\mcA()$ the spectral density for $A \sbl \sbr$. Our
notational convention uses the same letter for the spectral density
and covariance function, distinguishing the two with the type of
bracket: round for spectral densities and square for covariances. For
$\Delta > 0$, define the interval $\T_\Delta = [0,1/\Delta]$ and
hypercube $\T_\Delta^d$. When $h \in \Z^d$, the inverse Fourier
transform can be rewritten as
\begin{align}
  A \sbl \Delta h \sbr
  = \int_{\T_\Delta^d}  \,\,  \sum_{k \in \Z^d} 
  \mcA(\omega + k/\Delta)e^{i2\pi \Delta \omega \cdot h}d\omega
  \eqqcolon A_\Delta \sqb{h},  
\end{align}
which uses the aliasing property of complex exponentials and
introduces a notation $A_\Delta\sqb{} : \Z^d \to \R$ for covariances on a
grid of points with spacing $\Delta$. We define
\begin{align}
\mcA_\Delta(\omega) = \sum_{k \in \Z^d} \mcA( \omega + k/\Delta )
\end{align}
to be the aliased spectral density for $A$ on a grid with spacing
$\Delta$. The discrete covariances and the aliased spectral density
are related via
\begin{align}
  A_\Delta \sqb{h} =
  \int_{\T_\Delta^d}
  A_\Delta(\omega) e^{i2\pi\Delta \omega \cdot h} d\omega , 
  \quad
  A_\Delta(\omega) = \Delta^d 
  \sum_{h \in \Z^d} A_\Delta \sqb{h} e^{-i2\pi\Delta \omega \cdot h},
\end{align}
so that $A_\Delta\sqb{}$ is the integral Fourier transform of
$A_\Delta()$ over $\T_\Delta^d$, and $A_\Delta()$ is the infinite
discrete Fourier transform of $A_\Delta \sqb{}$. We say that
$A_\Delta^{-1}\sqb{}$ is the inverse of $A_\Delta\sqb{}$ if
\begin{align}\label{inverse_def}
 \Delta^d \sum_{k \in \Z^d} A_\Delta \sbl h-k \sbr A_\Delta^{-1} \sbl k \sbr 
= \mathbbm{1} \sbl h \sbr.
\end{align}
where $\mathbbm{1}\sbl h\sbr = 1$ when $h=0$ and $0$ otherwise. We
call $A_\Delta^{-1}\sqb{}$ the inverse operator.  Taking the infinite
discrete Fourier transform of both sides of \eqref{inverse_def}
reveals that
\begin{align}
  A_\Delta(\omega)A^{-1}_\Delta(\omega) = \Delta^d,
\end{align}
meaning that the spectrum of $A_\Delta^{-1}$ is the $\Delta^d$ times
the reciprocal of the spectrum of $A_\Delta$.

We can also define the square root of $A_\Delta$ to be the operator
$A^{1/2}_\Delta$ for which
\begin{align}\label{opmult_transpose}
A_\Delta \sbl h \sbr = 
\Delta^d \sum_{k \in \Z^d}  A^{1/2}_\Delta \sqb{h-k}A^{1/2}_\Delta \sqb{-k}
\end{align}
Note the difference between \eqref{inverse_def}, which is meant to mimic
the matrix multiplication $B B^{-1}$, and \eqref{opmult_transpose}, which
is meant to mimic the matrix multiplication $B B^T$.  Taking the
Fourier transform of both sides reveals that
\begin{align}
A_\Delta^{1/2}(\omega) A_\Delta^{1/2}(\omega)^* = A_\Delta(\omega),
\end{align}
where $*$ denotes complex conjugate.  The spectral density of
$A_\Delta^{1/2}$ is the complex square root of the spectral density of
$A_\Delta$. This means that the square root spectral density is unique
only up to multiplication by $\exp(i 2\pi \omega x)$. This is a necessary
consequence of the translation-invariance property of stationary
processes, since multiplication by $\exp(i 2\pi\omega x)$ in the spectral
domain corresponds to translation by $x$ in the natural domain. Square
root operators may also have inverses; their spectral densities again
follow the reciprocal relationship.

Square root operators are useful for the simulation of processes that
have particular covariances. Let $W: \Z^d \to \R$ be a white noise
process defined on a the integer lattice, that is, its autocovariance
function is the identity function $\mathbbm{1}\sqb{h}$. Define
$Y: (\Delta \Z)^d \to \R$ as
\begin{align}\label{convolution}
Y \sqb{\Delta j} = \Delta^{d/2} \sum_{k \in \Z^d} A_\Delta^{1/2} \sqb{ j-k} W\sqb{k}.
\end{align}
The covariance function for $Y$ is $A_\Delta$. The representation in
\eqref{convolution} is exploited in the convolution method of
\cite{higdon1998process}.  The inverse of the square root operator can
be used to decorrelate a process. For a process $Y$ on $(\Delta \Z)^d$
with autocovariance function $A_\Delta$, let $A_\Delta^{-1/2}$ be the
square root operator for $A_\Delta^{-1}$, and define $W$ on $\Z^d$ as
\begin{align}
  W \sqb{j}  = \Delta^{d/2} \sum_{k \in \Z^d} A_\Delta^{-1/2} \sqb{ k-j}  Y \sqb{\Delta k}.
\end{align}
The process $W$ has covariance function $\mathbbm{1}\sqb{h}$ and is
thus a white noise process.

\section{Mat\'ern Covariances and SPDE Approximations}\label{matern}

The stationary Mat\'ern covariance function is
\begin{align}
  M\sqb{ h \,;\, \nu, d}
  = \frac{ \sigma^2 (\alpha \| h \|)^{\nu} K_\nu(\alpha \| h \|)}{\Gamma(\nu)2^{\nu-1}}
  =  \int_{\R^d} \frac{ \sigma^2 N_{\alpha,\nu,d} }
     { \big( \alpha^2 + 4\pi^2 \| \omega \|^2 \big)^{\nu+d/2}}
         e^{i 2\pi \omega \cdot h}d\omega, 
\end{align}
where
$N_{\alpha,\nu,d} = 2^d
\pi^{d/2}\alpha^{2\nu}\Gamma(\nu+d/2)/\Gamma(\nu)$ is a normalizing
constant \citep{williams2006gaussian}. The aliased spectral density is 
\begin{align}
  \mcM_\Delta(\omega\,;\,\nu, d) = \sum_{k \in \Z^d}
  \sigma^2 N_{\alpha,\nu,d}
       \big(\alpha^2 + 4\pi^2 \|\omega + k/\Delta\|^2 \big)^{-\nu - d/2}.
\end{align}

\subsection{One Dimension, $\nu = 1/2$}

From here on, we set $\sigma^2 = 1$ to simplify the expressions. When
$d=1$ and $\nu = 1/2$, the aliased spectral density has the closed
form
\begin{align}\label{specden_12_1}
  M_\Delta(\omega\,;\,1/2,1) 
  = \Delta \frac{ 1 - e^{-2\Delta \alpha}}
                 {1 + e^{-2\Delta \alpha} 
                     - e^{-\Delta \alpha}e^{-i\omega 2\pi\Delta}
                     - e^{-\Delta \alpha}e^{+i\omega 2\pi\Delta} },
\end{align}
which can be proven by taking the discrete Fourier transform of the
covariance function. The inverse spectral density is $\Delta$ times the
reciprocal,
\begin{align}
  M^{-1}_\Delta(\omega\,;\,1/2,1) =
\frac{1 + e^{-2\Delta \alpha}  
        - e^{-\Delta \alpha}e^{-i\omega 2\pi\Delta}
        - e^{-\Delta \alpha}e^{+i\omega 2\pi\Delta} }
     {1 - e^{-2\Delta \alpha}},
\end{align}
and thus the inverse operator is
\begin{align}
  M^{-1}_\Delta\sqb{ h\,;\,1/2,1 } 
= \left\lbrace
  \begin{array}{ll}
    (1 + e^{-2\Delta \alpha})/(1-e^{-2\Delta \alpha}) & h = 0 \\
    - e^{-\Delta \alpha}/(1 - e^{-2\Delta \alpha}) & |h| = 1 \\
    0 & |h| > 1.
  \end{array}
  \right.      
\end{align}
The SPDE approximation for the inverse operator in
\cite{lindgren2011explicit} is
\begin{align}\label{spde_inv_12_1}
   \Mt_\Delta^{-1}\sqb{h\,;\, 1/2, 1}  = \left\lbrace
  \begin{array}{ll}
     \frac{\alpha\Delta}{2} + \frac{1}{\alpha\Delta} & \hspace{1.3mm} h \hspace{1mm}  = 0 \\
     -\frac{1}{2\alpha\Delta}  & |h| = 1 \\
     0 & |h| > 1.
  \end{array} \right. 
\end{align}
which corresponds to spectral density
\begin{align}\label{spde_spec_12_1}
\Mt_\Delta( \omega \,;\, 1/2, 1 ) = \Delta\Big( \frac{\alpha\Delta}{2} + \frac{1}{\alpha\Delta} - \frac{1}{2\alpha\Delta} e^{-i\omega 2\pi \Delta}- \frac{1}{2\alpha\Delta} e^{+i\omega 2\pi \Delta} \Big)^{-1}.
\end{align}
Our first theorem establishes that the true and SPDE spectral
densities for $d=1, \nu=1/2$ converge to the same values at frequencies $0$ and 
$1/(2\Delta)$ for small $\alpha \Delta$.
\begin{theorem}\label{specden_1o2}
\begin{align*}
\frac{M_\Delta( 0 \,;\, 1/2, 1 )}{2/\alpha} &= 1 + O(\alpha^2 \Delta^2)  
&\frac{M_\Delta\big( \frac{1}{2\Delta} \,;\, 1/2, 1 \big)}{2/\alpha} &= \frac{\alpha^2\Delta^2}{4} + O(\alpha^4 \Delta^4) \\
\frac{\Mt_\Delta( 0 \,;\, 1/2, 1 )}{2/\alpha} &=  1 
&\frac{\Mt_\Delta\big( \frac{1}{2\Delta} \,;\, 1/2, 1 \big)}{2/\alpha} &= \frac{\alpha^2\Delta^2}{4} + O(\alpha^4\Delta^4).
\end{align*}
\end{theorem}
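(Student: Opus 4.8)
The plan is to substitute the two frequencies directly into the closed-form expressions \eqref{specden_12_1} and \eqref{spde_spec_12_1}, simplify, and then Taylor-expand in the single dimensionless parameter $x = \alpha\Delta$. Every one of the four claims reduces to the small-$x$ behavior of an elementary function of $x$, so there is no deep obstacle here; the work is careful algebraic bookkeeping and correct tracking of the error orders.

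First I would treat the true density. Setting $\omega = 0$ in \eqref{specden_12_1} makes both complex exponentials equal to $1$, so the denominator collapses to the perfect square $1 + e^{-2\Delta\alpha} - 2e^{-\Delta\alpha} = (1 - e^{-\Delta\alpha})^2$; canceling the factor $1 - e^{-\Delta\alpha}$ against the numerator $1 - e^{-2\Delta\alpha} = (1-e^{-\Delta\alpha})(1+e^{-\Delta\alpha})$ leaves $M_\Delta(0\,;\,1/2,1)/(2/\alpha) = \tfrac{x}{2}\coth(x/2)$. At $\omega = 1/(2\Delta)$ the exponentials become $e^{\pm i\pi} = -1$, the denominator collapses instead to $(1+e^{-\Delta\alpha})^2$, and the analogous cancellation gives $M_\Delta(\tfrac{1}{2\Delta}\,;\,1/2,1)/(2/\alpha) = \tfrac{x}{2}\tanh(x/2)$.

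Next I would expand these two reductions. The one point deserving attention is that both $\tfrac{x}{2}\coth(x/2)$ and $\tfrac{x}{2}\tanh(x/2)$ are even in $x$, so their Maclaurin series contain only even powers; this is exactly what produces the stated error orders. From $\coth(y) = 1/y + y/3 + O(y^3)$ one gets $\tfrac{x}{2}\coth(x/2) = 1 + x^2/12 + O(x^4)$, yielding the first claim with error $O(\alpha^2\Delta^2)$, and from $\tanh(y) = y + O(y^3)$ one gets $\tfrac{x}{2}\tanh(x/2) = x^2/4 + O(x^4)$, yielding the second claim with error $O(\alpha^4\Delta^4)$. The jump from the leading $x^2/4$ straight to a quartic error in the second case is precisely what evenness guarantees.

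Finally, the SPDE density is even more direct. At $\omega = 0$ the bracketed factor in \eqref{spde_spec_12_1} is $\tfrac{\alpha\Delta}{2} + \tfrac{1}{\alpha\Delta} - \tfrac{1}{\alpha\Delta} = \tfrac{\alpha\Delta}{2}$, so $\Mt_\Delta(0\,;\,1/2,1) = 2/\alpha$ exactly and the ratio equals $1$ with no error term, as stated. At $\omega = 1/(2\Delta)$ the bracket becomes $\tfrac{\alpha\Delta}{2} + \tfrac{2}{\alpha\Delta} = \tfrac{x^2+4}{2\alpha\Delta}$, so the ratio reduces to $x^2/(x^2+4)$, and expanding the geometric factor $(1+x^2/4)^{-1}$ gives $x^2/4 + O(x^4)$, matching the fourth claim. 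Since all four reductions terminate in explicit elementary functions, the only (mild) obstacle is organizing the simplifications cleanly and confirming that the suppressed odd-order terms genuinely vanish.
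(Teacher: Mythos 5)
Your proposal is correct, and for the two claims about the true spectral density it takes a genuinely different route from the paper. The paper never touches the closed form \eqref{specden_12_1} in its proof: instead it works directly with the aliased series $\alpha^2\sum_{k\in\Z}(\alpha^2+4\pi^2k^2/\Delta^2)^{-1}$, applies a Taylor-remainder lemma for $(1+x)^{-1}$ term-by-term, and then evaluates the resulting sums $\sum_{k\ge 1}k^{-2}=\pi^2/6$, $\sum_{k\ge 0}(2k+1)^{-2}=\pi^2/8$, etc., to obtain $1+\alpha^2\Delta^2/12-\alpha^4\Delta^4/720+O(\alpha^6\Delta^6)$ and $\alpha^2\Delta^2/4-\alpha^4\Delta^4/48+O(\alpha^6\Delta^6)$. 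Your reductions $M_\Delta(0)/(2/\alpha)=\tfrac{x}{2}\coth(x/2)$ and $M_\Delta(\tfrac{1}{2\Delta})/(2/\alpha)=\tfrac{x}{2}\tanh(x/2)$ with $x=\alpha\Delta$ are exact and agree with these expansions, and your evenness observation gives a cleaner explanation of why the error jumps from the leading term straight to two orders higher. What your route buys is elegance and exact global formulas valid for all $x$, not just asymptotics; what it costs is generality, since it leans on the closed form \eqref{specden_12_1} (stated in the paper but only available because the $\nu=1/2$, $d=1$ aliased sum is a geometric series), whereas the paper's series-plus-lemma template is the one that carries over verbatim to Theorems 2--4 ($\nu=3/2$ and $d=2$), where no closed form exists. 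For the two SPDE claims your argument is essentially identical to the paper's: direct substitution, with the bracket collapsing to $\alpha\Delta/2$ at $\omega=0$ and to $(x^2+4)/(2\alpha\Delta)$ at $\omega=1/(2\Delta)$, followed by a geometric expansion of $(1+x^2/4)^{-1}$.
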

Proofs for all theoretical results are in Appendix \ref{proofs}.
Thorem \ref{specden_1o2} provides evidence that the SPDE approximation for $\nu = 1/2$,
$d=1$ is a good approximation to the true model when $\alpha\Delta$ is
small; their spectral densities are similar at the lowest frequency
($\omega = 0$) when the power is greatest and at the highest frequency
($\omega = \Delta^{-1}/2$) when the power is smallest, implying that
the both the SPDE spectral density and its reciprocal may be good
approximations to the truth, which in turn implies that both the
covariance operator and its inverse may be good approximations.

\subsection{One Dimension, $\nu = 3/2$}

When $\nu = 3/2$, the aliased spectral density is
\begin{align}\label{specden_32_1}
  \mcM_\Delta(\omega\,;\, 3/2,1) &= 
  \sum_{k \in \Z} \frac{4 \alpha^3}{[ \alpha^2 + 4\pi^2 (\omega + k/\Delta)^2 ]^2}.
\end{align}
The SPDE approximation in \cite{lindgren2011explicit} to the inverse
operator is simply the convolution of the $\nu = 1/2$ approximation 
\eqref{spde_inv_12_1} with itself
(here, normalizing constants are chosen so that $M_\Delta(0 \,;\, 3/2, 1) \to
\Mt_\Delta(0 \,;\, 3/2, 1) = 4/\alpha$ as $\alpha\Delta \to 0$),
\begin{align}
   (\alpha\Delta) \Mt_\Delta^{-1}\sqb{h\,;\, 3/2, 1}  = \left\lbrace
  \begin{array}{ll}
     \Big(\frac{\alpha\Delta}{2} + \frac{1}{\alpha\Delta}\Big)^2 + 
     \frac{1}{2\alpha^2\Delta^2} 
         & \hspace{1.3mm} h \hspace{1mm}  = 0 \\
     -\frac{1}{2} - \frac{1}{\alpha^2\Delta^2}  
         & |h| = 1 \\
     \frac{1}{4\alpha^2\Delta^2} & |h| = 2 \\
     0 & |h| > 2,
  \end{array} \right. 
\end{align}
which means that the spectral density for the $\nu = 3/2$ SPDE inverse
operator is simply the square of spectral density for the $\nu = 1/2$
SPDE inverse operator \eqref{spde_spec_12_1}, 
\begin{align}\label{spde_inv_32_1}
\Mt_\Delta^{-1}(\omega \,;\, 3/2, 1 ) = 
   \frac{1}{\alpha\Delta}\Big( 
     \frac{\alpha\Delta}{2} 
    +\frac{1}{\alpha\Delta} 
    -\frac{1}{2\alpha\Delta} e^{-i\omega 2\pi \Delta} 
    -\frac{1}{2\alpha\Delta} e^{+i\omega 2\pi \Delta} 
   \Big)^{2},
\end{align}
and the spectral density for the $\nu = 3/2$ 
SPDE covariance operator is
\begin{align}
\Mt_\Delta( \omega \,;\, 3/2, 1 ) = \alpha\Delta^2 
   \Big( 
     \frac{\alpha\Delta}{2} 
    +\frac{1}{\alpha\Delta} 
    -\frac{1}{2\alpha\Delta} e^{-i\omega 2\pi \Delta} 
    -\frac{1}{2\alpha\Delta} e^{+i\omega 2\pi \Delta} 
   \Big)^{-2}.
\end{align}
Note however, that the aliased Mat\'ern spectral density for
$\nu = 3/2$ in \eqref{specden_32_1} is not simply the square of the
aliased $\nu = 1/2$ spectral density in \eqref{specden_12_1}; rather,
we alias the square of the unaliased $\nu = 1/2$ spectral density. The
SPDE appproximation reverses the order of operations, squaring the
aliased spectral density. This subtle difference leads to the SPDE
approximation assigning too much power at the highest frequencies,
made explicit in the following theorem:
\begin{theorem}\label{specden_3o2}
\begin{align*}
\frac{M_\Delta( 0 \,;\, 3/2, 1 )}{4/\alpha} &= 
    1 + O(\alpha^4 \Delta^4) 
&\frac{M_\Delta\big( \frac{1}{2\Delta} \,;\, 3/2, 1 \big)}{4/\alpha} &=
    \frac{\alpha^4\Delta^4}{48} + O(\alpha^6 \Delta^6)  \\ 
\frac{\Mt_\Delta( 0 \,;\, 3/2, 1 )}{4/\alpha} &= 1 
&\frac{\Mt_\Delta\big( \frac{1}{2\Delta} \,;\, 3/2, 1 \big)}{4/\alpha} &=
    \frac{\alpha^4\Delta^4}{16} + O(\alpha^6\Delta^6).
\end{align*}
\end{theorem}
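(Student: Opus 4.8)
The plan is to handle the two SPDE identities (bottom row) and the two aliased-Mat\'ern identities (top row) separately, since the former reduce to elementary algebra while the latter require expanding an infinite series. Throughout I would set $u = \alpha\Delta$, as every ratio in the statement depends on $\alpha$ and $\Delta$ only through this dimensionless combination.

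For the SPDE side I would first merge the two complex exponentials in the displayed covariance spectral density $\Mt_\Delta(\omega\,;\,3/2,1)$ using $1 - \cos\theta = 2\sin^2(\theta/2)$, rewriting the squared bracket as $\big(\tfrac{u}{2} + \tfrac{2}{u}\sin^2(\pi\omega\Delta)\big)^{2}$. At $\omega = 0$ the sine vanishes and $\Mt_\Delta(0) = \alpha\Delta^2(u/2)^{-2} = 4/\alpha$ exactly, which is the third identity. At $\omega = 1/(2\Delta)$ one has $\sin^2(\pi/2) = 1$, so $\Mt_\Delta(1/(2\Delta))/(4/\alpha) = u^4/(u^2+4)^2$; expanding $(u^2+4)^{-2}$ about $u = 0$ yields $u^4/16 + O(u^6)$, the fourth identity.

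For the aliased-Mat\'ern side I would factor $\alpha^2$ out of each summand of the displayed $M_\Delta(\omega\,;\,3/2,1)$ to obtain the dimensionless series
\begin{align*}
\frac{M_\Delta(0\,;\,3/2,1)}{4/\alpha} = \sum_{k\in\Z}\frac{u^4}{(u^2 + 4\pi^2 k^2)^2}, \qquad \frac{M_\Delta(\tfrac{1}{2\Delta}\,;\,3/2,1)}{4/\alpha} = \sum_{k\in\Z}\frac{u^4}{(u^2 + \pi^2(2k+1)^2)^2},
\end{align*}
where in the second sum the shift $\omega + k/\Delta = (2k+1)/(2\Delta)$ leaves only odd integers in the denominators. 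At $\omega = 0$ the $k=0$ term equals $1$ exactly, and the remaining terms sum to $u^4\sum_{k\ne 0}(4\pi^2 k^2)^{-2} + O(u^6) = u^4/720 + O(u^6)$, using $\sum_{k\ne 0} k^{-4} = 2\zeta(4) = \pi^4/45$; this gives $1 + O(\alpha^4\Delta^4)$. At $\omega = 1/(2\Delta)$ no denominator vanishes, so every term is genuinely of order $u^4$, and the leading coefficient is $\tfrac{1}{\pi^4}\sum_{k\in\Z}(2k+1)^{-4} = \tfrac{1}{\pi^4}\cdot\tfrac{\pi^4}{48} = \tfrac{1}{48}$, where the odd-integer sum equals $2(1-2^{-4})\zeta(4) = \pi^4/48$; this gives $\alpha^4\Delta^4/48 + O(\alpha^6\Delta^6)$.

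The only step requiring genuine care, and hence the main obstacle, is justifying the term-by-term expansion and bounding the series remainder uniformly in $k$. I would control it through the exact identity
\begin{align*}
\frac{u^4}{(u^2 + a_k)^2} - \frac{u^4}{a_k^2} = -\,\frac{u^6(u^2 + 2a_k)}{a_k^2(u^2 + a_k)^2},
\end{align*}
with $a_k = 4\pi^2 k^2$ (respectively $\pi^2(2k+1)^2$); for $u \le 1$ the right-hand side is bounded in absolute value by $3u^6/a_k^3$, and since $\sum_k a_k^{-3}$ converges, the aggregate remainder is $O(u^6)$. This simultaneously licenses the interchange of summation with expansion and secures the stated error orders.
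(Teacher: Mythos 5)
Your proposal is correct and takes essentially the same approach as the paper: the SPDE relations are handled by direct algebra and expansion of $\big(1+\tfrac{\alpha^2\Delta^2}{4}\big)^{-2}$, and the Mat\'ern relations by rewriting the aliased series in dimensionless form, expanding term by term, and evaluating the leading sums via $\zeta(4)$ (giving $1/720$ and $1/48$). The only difference is bookkeeping of the remainder: the paper uses its Lemma 1, $(1+x)^{-2} = 1 - 2x + x^2 d(x)$ with $0 < d(x) < 3$, applied uniformly in $k$, which is equivalent to your exact-identity bound $3u^6/a_k^3$.
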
 
When scaled by $4/\alpha$, both spectral densities converge to
1 when $\omega = 0$ and $\alpha\Delta \to 0$, but they 
converge to two different values, $\alpha^4\Delta^4/48$ and
$\alpha^4\Delta^4/16$, when $\omega = \Delta^{-1}/2$, meaning that the
SPDE spectral density assigns three times too much power at the
highest frequency.  The inaccuracy of the spectral density at high
frequencies impacts the quality of the approximation to the reciprocal
of the spectral density, seen in Figure \ref{spec_recip}, and to the inverse
operator, as evidenced in Figure \ref{Q_vs_nu_1d_intro}.

     \begin{figure}
       \centering
       \includegraphics[width=0.8\textwidth]{./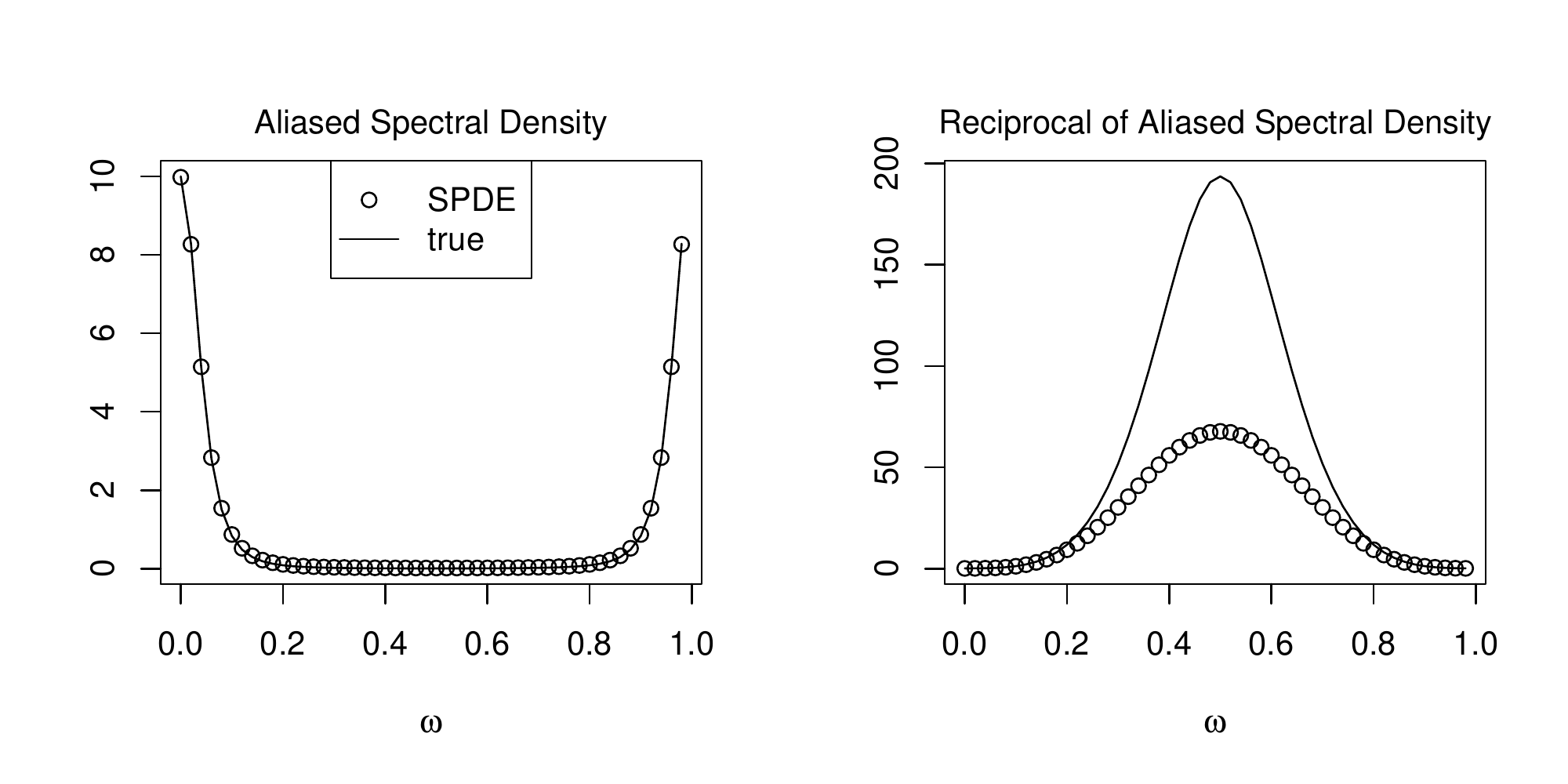}
       \caption{\label{spec_recip} For $d=1$, $\nu = 3/2$,
         $\alpha = 0.4$, true aliased spectral density and its
         reciprocal (lines) and SPDE approximation to the aliased
         spectral density and its reciprocal (circles).}
     \end{figure}

\subsection{Two Dimensions}

The aliased spectral density for the Mat\'ern in two dimensions is
\begin{align}
  M_\Delta(\omega \,;\, \nu, 2 ) = 4\pi \alpha^2 \sum_{k \in \Z^2}
  \big[ \alpha^2 + 
        4\pi^2 (\omega_1 + k_1/\Delta)^2 + 
        4\pi^2 (\omega_2 + k_2/\Delta)^2 
  \big]^{-\nu-1}.
\end{align}
The following theorem establishes properties of the
aliased Mat\'ern spectral density for $\nu = 1$ at the lowest
frequency and at high frequencies in one and both spatial dimensions.
\begin{theorem}\label{theorem_2d_true}
\begin{align}
\frac{M_\Delta( (0,0) \,;\, 1, 2 )}{4\pi/\alpha^2} &=
 1 +  \frac{\alpha^4 \Delta^4}{258.6} + O(\alpha^6 \Delta^6)\\
\frac{ M_\Delta\big( (\frac{1}{2\Delta},0)\,;\, 1,2 \big) }{4\pi/\alpha^2} &=
  \frac{\alpha^4\Delta^4}{43.10} + O(\alpha^6\Delta^6)\\
\frac{ M_\Delta\big( (\frac{1}{2\Delta},\frac{1}{2\Delta})\,;\, 1,2 \big) }{4\pi/\alpha^2} &=
 \frac{\alpha^4\Delta^4}{86.20} + O(\alpha^6\Delta^6)
\end{align}
\end{theorem}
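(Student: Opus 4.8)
The plan is to collapse all three evaluations onto a single rescaled lattice sum and then read off each leading coefficient as an explicit Epstein-type sum. First I would factor $4\pi^2/\Delta^2$ out of every bracket in the definition of $M_\Delta(\omega;1,2)$. Writing $u_j = \Delta\omega_j$ and $c = \alpha^2\Delta^2/(4\pi^2)$, each summand becomes $(4\pi^2/\Delta^2)^{-2}\,[c + (u_1+k_1)^2 + (u_2+k_2)^2]^{-2}$, and after collecting the prefactor $4\pi\alpha^2$ the whole quantity simplifies to
\begin{equation*}
\frac{M_\Delta(\omega;1,2)}{4\pi/\alpha^2} = c^2\, S(u_1,u_2,c), \qquad S(u_1,u_2,c) = \sum_{k\in\Z^2}\big[c + (u_1+k_1)^2 + (u_2+k_2)^2\big]^{-2}.
\end{equation*}
The three frequencies of interest are exactly $(u_1,u_2)\in\{(0,0),(\tfrac12,0),(\tfrac12,\tfrac12)\}$, the regime $\alpha\Delta\to0$ is $c\to0$, and $c^2 = \alpha^4\Delta^4/(16\pi^4)$ while $c^3 = O(\alpha^6\Delta^6)$, so it remains to expand $c^2 S$ in powers of $c$.

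For $(u_1,u_2)=(0,0)$ the term $k=0$ equals $c^{-2}$ and produces the leading $1$; the remaining sum $\sum_{k\neq0}[c+(k_1^2+k_2^2)]^{-2}$ is analytic at $c=0$, with value $Z \coloneqq \sum_{k\neq0}(k_1^2+k_2^2)^{-2}$ and $c$-derivative controlled by $\sum_{k\neq0}(k_1^2+k_2^2)^{-3}$, so $c^2 S(0,0,c) = 1 + c^2 Z + O(c^3)$. For the two boundary frequencies no denominator vanishes, hence $S$ is analytic at $c=0$ and $c^2 S = c^2 S_\ast + O(c^3)$ with $S_\ast = S(u_1,u_2,0)$. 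In each case the $|k|^{-4}$ (respectively $|k|^{-6}$) tail bounds justify term-by-term differentiation and confirm the remainder is genuinely $O(c^3)=O(\alpha^6\Delta^6)$, with no intervening lower-order power of $c$.

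It then remains to evaluate three lattice sums. I would identify $Z=\sum_{(m,n)\neq0}(m^2+n^2)^{-2}=4\zeta(2)\beta(2)=\tfrac{2\pi^2}{3}G$, the Epstein zeta of the Gaussian integers, where $\beta$ is the Dirichlet beta function and $G=\beta(2)$ is Catalan's constant. Writing $S_{eo}=\sum_{m,n}(m^2+(n+\tfrac12)^2)^{-2}$ and $S_{oo}=\sum_{m,n}((m+\tfrac12)^2+(n+\tfrac12)^2)^{-2}$ for the sums at $(\tfrac12,0)$ and $(\tfrac12,\tfrac12)$, splitting $Z$ by the parities of $m$ and $n$ gives $16Z = Z + S_{oo} + 2S_{eo}$, hence $15Z = S_{oo}+2S_{eo}$; and the substitution $(p,q)=(m-n,\,m+n+1)$ — a bijection of $\Z^2$ onto $\{p+q\ \mathrm{odd}\}$ under which $(m+\tfrac12)^2+(n+\tfrac12)^2 = \tfrac12(p^2+q^2)$ — yields $S_{oo}=4\sum_{p+q\ \mathrm{odd}}(p^2+q^2)^{-2}=\tfrac12 S_{eo}$, the last equality reusing the same parity bookkeeping. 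Solving gives $S_{eo}=6Z=4\pi^2 G$ and $S_{oo}=3Z=2\pi^2 G$, so the coefficients are $Z/(16\pi^4)=G/(24\pi^2)=1/258.6$, $S_{eo}/(16\pi^4)=G/(4\pi^2)=1/43.10$, and $S_{oo}/(16\pi^4)=G/(8\pi^2)=1/86.20$, matching the three stated expansions.

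The rescaling and the Taylor expansion in $c$ are routine; the crux is the evaluation of the shifted lattice sums. The cleanest route is the parity-splitting and $45^\circ$-rotation identities above, which reduce all three constants to the single classical value $Z=\tfrac{2\pi^2}{3}G$ and explain the exact $1:2:6$ ratios among the reported denominators. The main point demanding care is the uniformity of the term-by-term $c$-expansion — controlling the lattice tails so that the subleading contributions are provably $O(\alpha^6\Delta^6)$ and not larger — and if one prefers to sidestep the closed forms, the very same coefficients follow by evaluating these absolutely convergent sums numerically to the stated precision.
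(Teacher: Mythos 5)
Your proposal is correct, and its skeleton is the same as the paper's: you factor $4\pi^2/\Delta^2$ out of each bracket to write the scaled spectral density as $c^2 S(u_1,u_2,c)$ with $c=\alpha^2\Delta^2/(4\pi^2)$, peel off the $k=0$ term at the origin, and expand the absolutely convergent remainder in powers of $c$ — exactly the rescaling the paper performs, with your analyticity/domination argument playing the role of the paper's Lemma 1 (whose uniform bounds $0<q_k<3$ give a slightly more explicit control of the $O(\alpha^6\Delta^6)$ remainder; you would want to state such a uniform bound rather than only term-by-term differentiability, but this is a presentational point, not a gap). Where you genuinely depart from the paper is the evaluation of the constants: the paper computes the three lattice sums \emph{numerically}, reporting $258.602$, $43.1003$, $86.2007$, while you obtain them in \emph{closed form}. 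Your identities check out: $Z=\sum_{k\neq0}(k_1^2+k_2^2)^{-2}=4\zeta(2)\beta(2)=\tfrac{2\pi^2}{3}G$ with $G$ Catalan's constant; the parity split $16Z=Z+S_{oo}+2S_{eo}$; and the change of variables $(p,q)=(m-n,\,m+n+1)$, which is indeed a bijection of $\Z^2$ onto $\{p+q \text{ odd}\}$ satisfying $(m+\tfrac12)^2+(n+\tfrac12)^2=\tfrac12(p^2+q^2)$, yielding $S_{oo}=\tfrac12 S_{eo}$ and hence $S_{eo}=6Z$, $S_{oo}=3Z$. The resulting coefficients $G/(24\pi^2)$, $G/(4\pi^2)$, $G/(8\pi^2)$ correspond to denominators $24\pi^2/G\approx 258.60$, $4\pi^2/G\approx 43.100$, $8\pi^2/G\approx 86.200$, matching the paper's numerics to all reported digits. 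Your route buys three things the paper's does not: exact values for constants the paper can only report "to higher accuracy" numerically, an explanation of the exact $6:2:1$ ratios among the three denominators, and exact limiting over-approximation factors for the SPDE comparison with Theorem \ref{theorem_2d_SPDE}, namely $(4\pi^2/G)/16=\pi^2/(4G)\approx 2.694$ at $(\tfrac{1}{2\Delta},0)$ and $(8\pi^2/G)/64=\pi^2/(8G)\approx 1.347$ at $(\tfrac{1}{2\Delta},\tfrac{1}{2\Delta})$. The paper's purely numerical route is in turn more robust: it requires no number-theoretic identities and extends unchanged to smoothness values $\nu$ for which no closed form of the Epstein-type sum is available.
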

The numbers 258.6, 43.10, and 86.20 are the result of numerical
calculations and are rounded to one or two decimals.  They are
available to higher accuracy. Details are given in the proof in the
supplementary material. The SPDE approximation to the inverse
operator is
\begin{align}\label{SPDE_coefs_2d}
4\pi (\alpha\Delta)^2 \widetilde{M}_\Delta^{-1} \sqb{h\,;\,1,2} = \left\lbrace \begin{array}{rl}
(4 + \alpha^2\Delta^2)^2 + 4 & \quad h = (0,0) \\
-2(4+\alpha^2\Delta^2) & \quad h = (0,1), (0,-1), (1,0), (-1,0) \\
2 & \quad h = (1,1), (1,-1), (-1,1), (-1,-1) \\
1 & \quad h = (2,0), (0,2), (-2,0), (0,-2) \\
0 & \quad \mbox{otherwise},
\end{array}
\right.
\end{align}
which corresponds to the spectral density
\begin{align}
\widetilde{M}_\Delta(\omega \,;\, 1, 2) = 
4\pi\alpha^2\Delta^4
\Big( 4 + \alpha^2\Delta^2 
- e^{i2\pi\Delta\omega_1} 
- e^{-i2\pi\Delta\omega_1} 
- e^{i2\pi\Delta\omega_2} 
- e^{-i2\pi\Delta\omega_2} \Big)^{-2}.
\end{align}
The normalizing constants are chosen so that
$M_\Delta((0,0) \,;\, 1,2 ) \to \Mt_\Delta((0,0) \,;\, 1,2) = 4\pi/\alpha^2$ as
$\alpha\Delta \to 0$. The following theorem establishes the behavior
of $\Mt_\Delta(\omega \,;\, 1, 2)$ at the same frequencies in Theorem \ref{theorem_2d_true}.
\begin{theorem}\label{theorem_2d_SPDE}
\begin{align}
\frac{\Mt_\Delta( (0,0) \,;\, 1, 2 )}{4\pi/\alpha^2} 
&= 1  \\
\frac{\Mt_\Delta\big( (\frac{1}{2\Delta},0)\,;\, 1,2 \big) }{4\pi/\alpha^2} 
&= \frac{\alpha^4\Delta^4}{16}  + O( \alpha^6\Delta^6 )\\
\frac{\Mt_\Delta\big( (\frac{1}{2\Delta},\frac{1}{2\Delta})\,;\, 1,2 \big) }{4\pi/\alpha^2} 
&= \frac{\alpha^4\Delta^4}{64} + O( \alpha^6\Delta^6 )
\end{align}
\end{theorem}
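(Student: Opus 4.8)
The plan is to evaluate the SPDE spectral density
\[
\widetilde{M}_\Delta(\omega \,;\, 1, 2) =
4\pi\alpha^2\Delta^4
\Big( 4 + \alpha^2\Delta^2
- e^{i2\pi\Delta\omega_1}
- e^{-i2\pi\Delta\omega_1}
- e^{i2\pi\Delta\omega_2}
- e^{-i2\pi\Delta\omega_2} \Big)^{-2}
\]
directly at the three frequencies of interest, since unlike the true aliased density in Theorem \ref{theorem_2d_true} there is no infinite sum to contend with: the SPDE density is a closed-form rational function of the exponentials. First I would simplify the bracketed factor using $e^{i\theta} + e^{-i\theta} = 2\cos\theta$, rewriting the denominator as $4 + \alpha^2\Delta^2 - 2\cos(2\pi\Delta\omega_1) - 2\cos(2\pi\Delta\omega_2)$, and then further as $\alpha^2\Delta^2 + 4\sin^2(\pi\Delta\omega_1) + 4\sin^2(\pi\Delta\omega_2)$ via the identity $2 - 2\cos\theta = 4\sin^2(\theta/2)$. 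This form makes the three evaluations transparent.

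For $\omega = (0,0)$, both sine terms vanish, so the denominator is exactly $\alpha^2\Delta^2$ and the whole expression is $4\pi\alpha^2\Delta^4 / (\alpha^2\Delta^2)^2 = 4\pi/\alpha^2$; dividing by $4\pi/\alpha^2$ gives exactly $1$, with no error term, matching the first line. For $\omega = (1/(2\Delta), 0)$, the first argument is $\pi\Delta \cdot 1/(2\Delta) = \pi/2$ so $\sin^2(\pi/2) = 1$ while the second sine vanishes, giving denominator $\alpha^2\Delta^2 + 4$. For $\omega = (1/(2\Delta), 1/(2\Delta))$, both sines equal $1$, giving denominator $\alpha^2\Delta^2 + 8$. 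In each case the scaled density becomes $\alpha^4\Delta^4/(\alpha^2\Delta^2 + c)^2$ for $c = 4$ or $c = 8$ respectively.

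The only remaining step is the small-$\alpha\Delta$ expansion of $(\alpha^2\Delta^2 + c)^{-2}$. Writing $(c + \alpha^2\Delta^2)^{-2} = c^{-2}(1 + \alpha^2\Delta^2/c)^{-2} = c^{-2}\big(1 - 2\alpha^2\Delta^2/c + O(\alpha^4\Delta^4)\big)$, the leading term of the scaled density is $\alpha^4\Delta^4 / c^2$, and the correction is $O(\alpha^6\Delta^6)$, consistent with the stated error orders. For $c = 4$ this yields $\alpha^4\Delta^4/16$, and for $c = 8$ it yields $\alpha^4\Delta^4/64$, reproducing the second and third lines exactly. I expect no genuine obstacle here: the entire proof is a finite algebraic evaluation followed by a one-term Taylor expansion, with the half-angle identity being the only substantive simplification. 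The contrast with Theorem \ref{theorem_2d_true} is precisely that the true density requires resumming an infinite aliasing sum to extract the constants $258.6$, $43.10$, $86.20$, whereas the SPDE density is rational and closed-form, so its high-frequency values $16$ and $64$ fall out immediately — and comparing these to the true values exhibits the excess high-frequency power that is the paper's central point.
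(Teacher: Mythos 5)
Your proposal is correct and follows essentially the same route as the paper's proof: evaluate the closed-form SPDE spectral density at the three frequencies (the paper plugs in $e^{\pm i\pi}=-1$ directly where you use the half-angle identity, yielding the same denominators $\alpha^2\Delta^2$, $4+\alpha^2\Delta^2$, and $8+\alpha^2\Delta^2$), then expand $(1+\alpha^2\Delta^2/c)^{-2}$ to one term. The only cosmetic difference is that the paper invokes its Lemma 1 to give an explicit bound on the Taylor remainder rather than a bare $O(\alpha^4\Delta^4)$, which is immaterial to the stated conclusion.
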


Theorems \ref{theorem_2d_true} and \ref{theorem_2d_SPDE} imply that
when $\alpha\Delta$ is small, the SPDE approach over-approximates the spectral
density by a factor of $43.1/16 = 2.69$ at $\omega = (\Delta^{-1}/2,0)$
and $86.2/64 = 1.35$ at $\omega = (\Delta^{-1}/2,\Delta^{-1}/2)$. Figure
\ref{spec_ratio_2d} contains an example where the ratio between the
SPDE spectral density and the true spectral density varies between
0.999 at $\omega = (0,0)$, 2.680 at $\omega = (\Delta^{-1}/2,0)$ and 1.345
at $\omega = (\Delta^{-1}/2,\Delta^{-1}/2)$.

    \begin{figure}
      \centering
      \includegraphics[width=1.0\textwidth]{./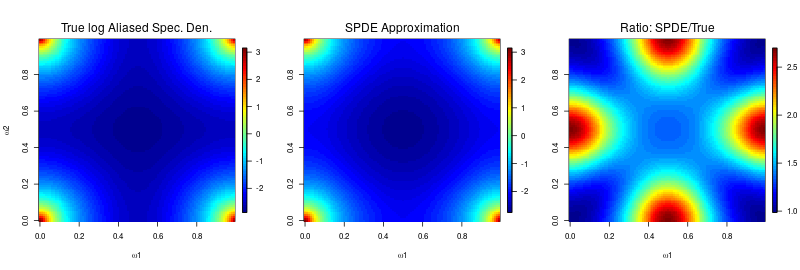}
      \caption{\label{spec_ratio_2d} True spectral density for
        $\nu = 1$, $\alpha = 0.5$, SPDE approximation to the
        spectral density, and the ratio of the two. The
        ratio is near 1.00 at $(0,0)$, near 2.69 at $(1/2\Delta,0)$ and
        near 1.35 at $(1/2\Delta,1/2\Delta)$, as predicted by the theory.}
    \end{figure}

\section{Numerical and Simulation Results}\label{numerical_section}

\subsection{Dependence of Inverse Operator on Smoothness}

In Figures \ref{Q_vs_nu_1d} and \ref{Q_vs_nu_2d}, we plot
$M^{-1}_\Delta\sqb{0 \,;\, \nu,d}$ and
$M^{-1}_\Delta\sqb{h \,;\, \nu, d}/M^{-1}_\Delta\sqb{0 \,;\, \nu, d}$ for
$\Delta = 1$, $d=1,2$, and for a range of values of smoothness
parameter $\nu$ and inverse range $\alpha$. First consider $d=1$. When
$h=1$, the operator is always negative. When $h = 2$, the operator is
exactly 0 for $\nu = 0.5$, agreeing with the theory from Section 3 of
the main document, which says that the operator is exactly zero when
$\nu = 0.5$ for all $|h|>1$. When $h=3$, there is an additional zero
near $\nu = 0.7$ but there is no zero at $\nu = 1.5$; the SPDE
approximation \citep{lindgren2011explicit} sets the operator equal to
zero when $\nu = 1.5$ for all $|h| > 2$. When $h=4$, the only zero is
at $\nu = 0.5$. For $d=2$, the SPDE approximation sets the inverse
operator to zero when $\nu = 1.0$ and $|h_1| + |h_2| > 2$.  Figure
\ref{Q_vs_nu_2d} shows that while there are some zeros in the inverse
operator, they generally do not appear at or near $\nu = 1.0$. For
example, when $h = (1,2)$, the inverse operator is nearly at its
maximum when $\nu = 1.0$. While the magnitudes of the operators
generally decrease as $\|h\|$ increases, there does not appear to be
anything particularly sparse about the cases $d=1, \nu = 3/2$ or
$d=2, \nu = 1$.

\subsection{Simulation Study}

We simulated two-dimensional data on a $(30,30)$ grid under a Mat\'ern
model with $\sigma^2 = 2$, $\alpha = 0.2$, and $\nu = 1$, and with
three noise levels, $\tau^2 = 0$, $\tau^2 = 0.01$, and $\tau^2 =
0.1$. The model is parameterized so that the total variance is
$\sigma^2(1 + \tau^2)$, which means that $1/\tau^2$ is the signal to
noise ratio. Data are simulated using a standard method of forming the
$900 \times 900$ true covariance matrix and multiplying its Cholesky
factor by a vector of standard normals. We assumed that the mean was
known to be zero, and the smoothness parameter $\nu$ was known to be
1. In the zero noise case ($\tau^2 = 0$), we assumed that the noise
parameter was known to be zero; otherwise, we estimated the noise
parameter $\tau^2$ along with $\sigma^2$ and $\alpha$.

\begin{figure}
  \centering
  \includegraphics[width=\textwidth]{./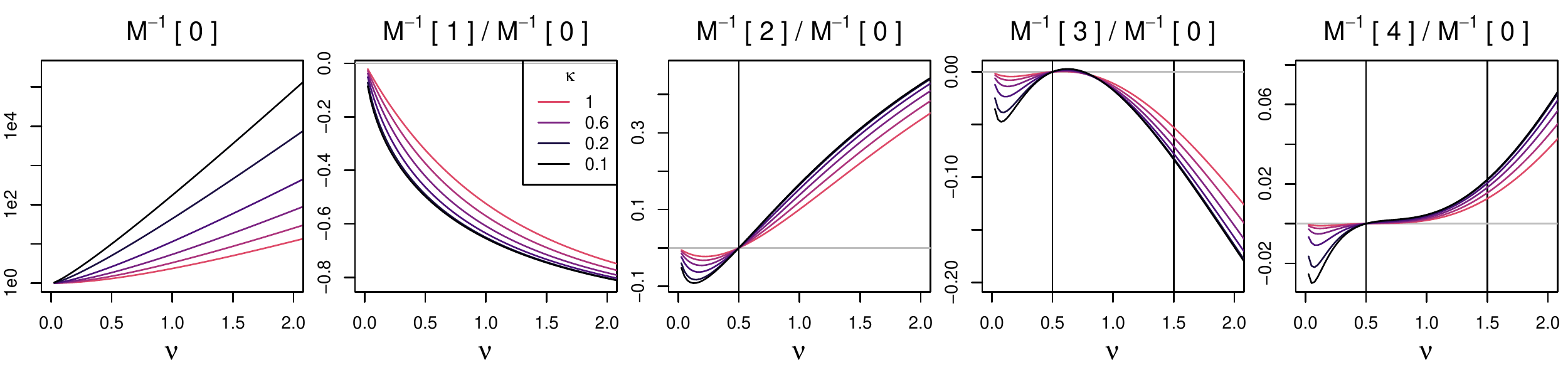}
  \caption{\label{Q_vs_nu_1d} For dimension $d=1$,
    $M_1^{-1}\sqb{h \,;\, \nu, 1}/M_1^{-1}\sqb{0 \,;\, \nu, 1}$ as a function
    of $\nu$ for various values of $h$ and several inverse range
    parameters $\alpha$. Vertical lines indicate values set to zero in
    the SPDE approximation.}
\end{figure}

\begin{figure}
  \centering
  \includegraphics[width=\textwidth]{./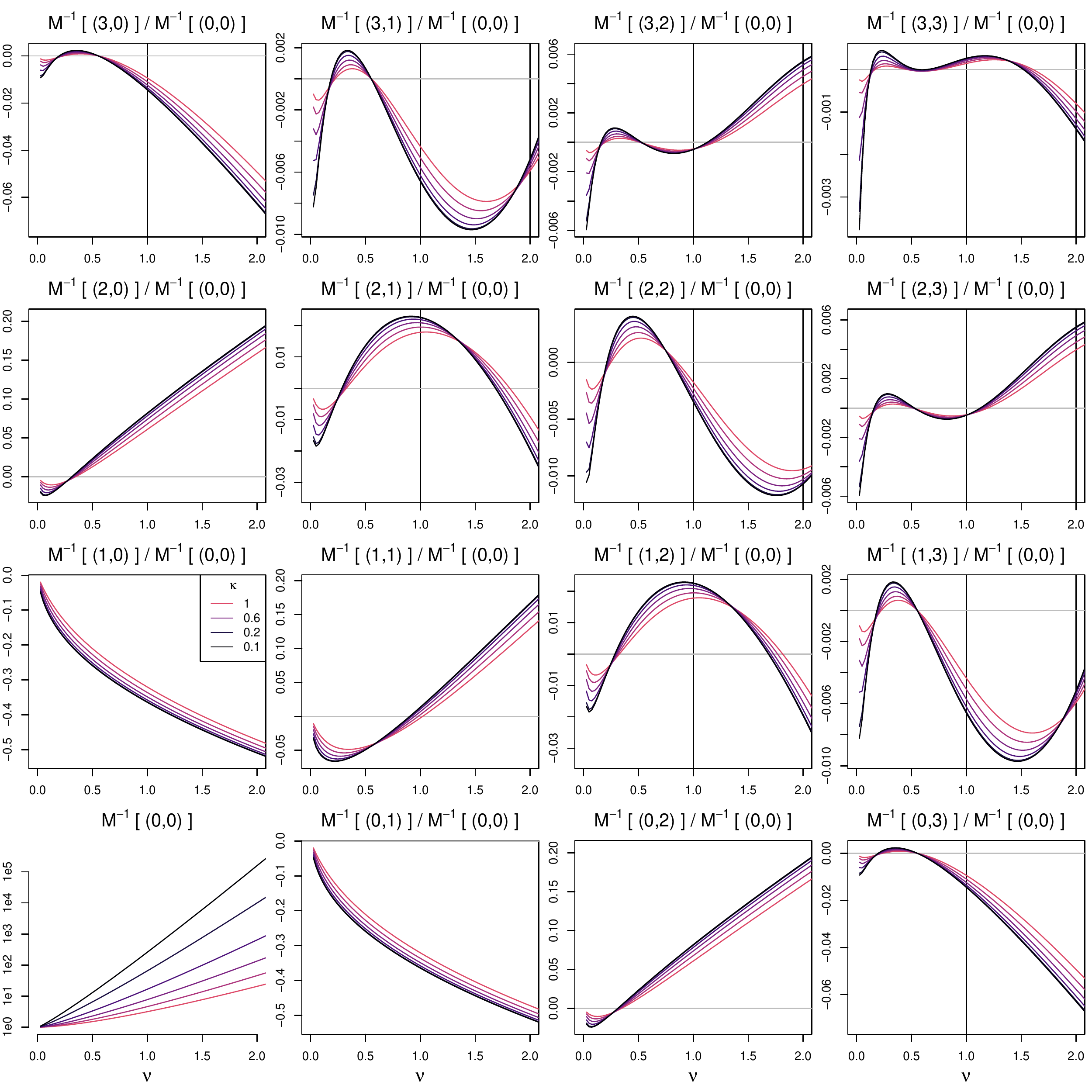}
  \caption{\label{Q_vs_nu_2d} For dimension $d=2$,
    $M_1^{-1}\sqb{h \,;\, \nu, 2}/M_1^{-1}\sqb{0 \,;\, \nu, 2}$ as a function
    of $\nu$ for various values of $h$ and several inverse range
    parameters $\alpha$. Vertical lines indicate values set to zero in
    the SPDE approximation.}
\end{figure}

Parameters were estimated via maximum likelihood under
each of the following scenarios:
\begin{enumerate}
\item true model
\item Vecchia's approximation using GpGp R package.
\item SPDE approximation
\item SPDE approximation double resolution
\end{enumerate}
For the SPDE methods, we ameliorate edge effects by using the
approximations to compute covariances under periodic boundary
conditions on a $(100,100)$ grid and extract the covariances from a
$(30,30)$ subgrid.  Whereas the data grid has $\Delta = 1$, the SPDE
double resolution method computes covariances using $\Delta = 1/2$,
leading to different covariances than those obtained by the SPDE
approximation with $\Delta = 1$. Vecchia's approximation
\citep{vecchia1988estimation} is implemented in the GpGp R package
\citep{GpGp}. GpGp estimates a constant mean parameter and has minor
penalties on small nuggets; see \cite{guinness2019gaussian} for
details.

Figure \ref{simstudy_micro} contains results of the simulation study
over 500 simulation replicates. Each black point in the plot is an
estimate of the microergodic parameter \citep{zhang2004inconsistent}
$\widehat{\sigma}^2 \widehat{\alpha}^2$. The estimates are sorted and
evenly spaced in the horizontal direction, creating a visualization of
the empirical quantile function of the estimates. We include a magenta
line for the true value $2(0.2)^2 = 0.08$.  Several interesting things
arise from the simulation study. Focusing on the zero noise case
first, the SPDE approximations underestimate the microergodic
parameter. This corresponds to an overestimation of the spatial range
since $\alpha$ is an inverse range parameter. This makes sense given
that the SPDE approximation assigns too much power at the highest
frequencies; the likelihood must select a larger range parameter to
dampen the power. The estimates improve somewhat in SPDE Double
Resolution but still have a bias. When noise is added, the SPDE
approximations begin to improve. Vecchia's approximation is accurate
in every case.

\begin{figure}
  \centering
  \includegraphics[width=1.0\textwidth]{./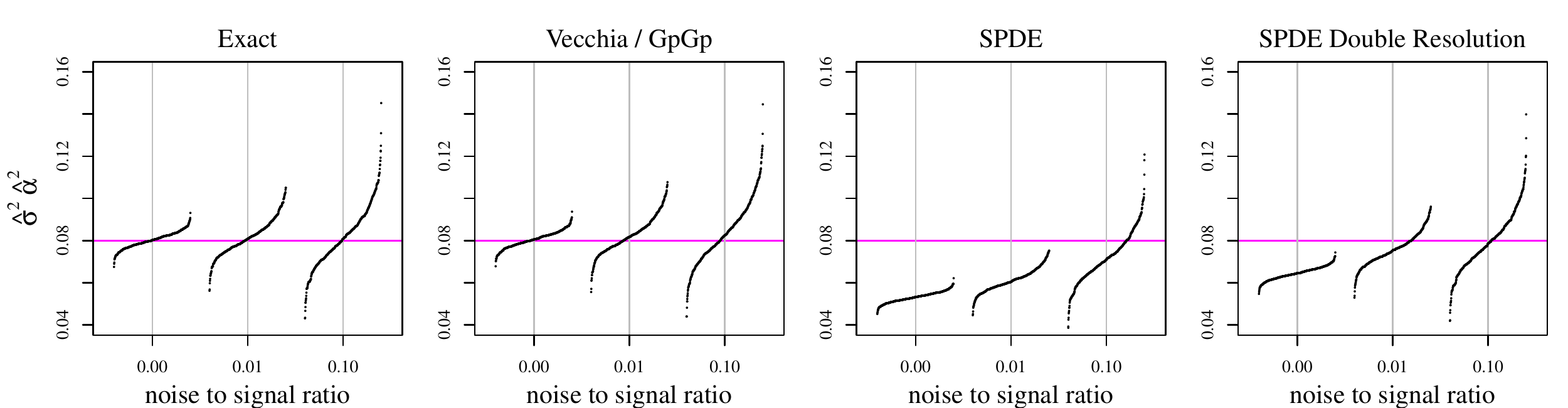}
  \caption{\label{simstudy_micro} Estimates of
    $\widehat{\sigma}^2\widehat{\alpha}^2$, over 500 simulation replicates
    of data on grid of size $(30,30)$ for three noise levels. Each
    black point is a parameter estimate. Estimates are sorted and
    spaced evenly in the horizontal direction to create visualization
    of empirical quantile function. True value
    $\sigma^2 \alpha^2 = 2(0.2)^2 = 0.08$ indicated in magenta.}
\end{figure}

\section{Discussion}\label{discussion}

The SPDE approximation has proven to be useful as a computational tool
and as a conceptual tool for defining extensions to irregularly-spaced
data, models on manifolds, and to non-stationary models
\citep{fuglstad2015exploring,bakka2018spatial}. This paper does not
question the usefulness of the SPDE approach as a tool for data
analysis. Rather, it is a study of the ability of the SPDE approach to
approximate Mat\'ern models on grids.  

We study SPDE approximations to Mat\'ern fields observed at point
locations on a grid, as opposed to observations of gridbox
averages. While SPDE approximations have been applied in both cases,
the spectral properties of gridbox average fields are different, and
it is not clear to the author whether the SPDE approximations would be
more or less accurate in the gridbox average case. This is certainly
and interesting question worthy of future study. In addition, it would
be interesting to explore extensions to irregularly sampled locations.
In both of these cases, a study of the impact of the approximations on
predictions is also warranted. It seems plausible that if the
approximate model is used for both inferring parameters and generating
predictions, the resulting predictions would be reasonably accurate.

\begin{center}
\Large{\bf Acknowledgements}
\end{center}

This work was supported by the National Science Foundation under grant
No.\ 1916208 and the National Institutes of Health under grant No.\
R01ES027892. The author is grateful to have received helpful feedback
on an early draft from Matthias Katzfuss, Michael Stein, and Ethan
Anderes, and from an Associate Editor and two anonymous reviewers, one
of which pointed out an mistake in the first version.

\appendix

\section{ Extended Background }\label{spectral_appendix}

The first proposition establishes the aliasing property of complex
exponential and uses it to express the covariances as an integral on a
bounded domain instead of an infinite domain. Note that we define 
$\T_\Delta = [0,\Delta^{-1}]$ as opposed to $[-\Delta^{-1}/2,\Delta^{-1}/2]$,
which is perhaps a more common convention. Both lead to equivalent
integrals, as all of the aliased functions are periodic on $\T_\Delta$. We
prefer this convention because it more closely maps onto how most
software organizes their fast Fourier transform functions. For example,
in R, the `fft' function returns a vector whose first entry corresponds
to frequency $0$ instead of frequency $-\Delta^{-1}/2$.
\begin{proposition} (Aliasing)
For $h \in \Z^d$,
\begin{align}
  A \sbl \Delta h \sbr
  = \int_{\T_\Delta^d} \Big[ \sum_{k \in \Z^d}
  \mcA(\omega + k/\Delta) \Big] \exp({i2\pi \Delta \omega \cdot h})d\omega
\end{align}
\end{proposition}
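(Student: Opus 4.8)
The plan is to start from the Fourier inversion formula implied by the definition of the spectral density. Since $\mcA(\omega)$ is defined as the forward Fourier transform of $A\sbl\sbr$, inversion gives
\begin{align*}
A\sbl \Delta h \sbr = \int_{\R^d} \mcA(\omega)\, e^{i2\pi\Delta\omega\cdot h}\, d\omega,
\end{align*}
and the whole argument consists of reorganizing this integral over all of $\R^d$ into an integral over the single fundamental cell $\T_\Delta^d$.

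First I would partition the frequency domain into translates of that cell, $\R^d = \bigcup_{k \in \Z^d}(\T_\Delta^d + k/\Delta)$, a union that is disjoint up to a set of Lebesgue measure zero. Splitting the integral accordingly and performing the change of variables $\omega \mapsto \omega + k/\Delta$ in the $k$-th piece returns every term to the common domain $\T_\Delta^d$:
\begin{align*}
A\sbl \Delta h \sbr = \sum_{k \in \Z^d} \int_{\T_\Delta^d} \mcA(\omega + k/\Delta)\, e^{i2\pi\Delta\omega\cdot h}\, e^{i2\pi k\cdot h}\, d\omega.
\end{align*}

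The crux is the aliasing identity for the complex exponential, which is exactly where the hypothesis $h \in \Z^d$ enters. The factor introduced by the shift is $e^{i2\pi k\cdot h}$, and because both $k$ and $h$ lie in $\Z^d$, the inner product $k\cdot h$ is an integer, so this factor equals $1$. Note that the grid spacing cancels cleanly, since $(k/\Delta)\cdot\Delta h = k\cdot h$; this collapses the $k$-dependence in the exponential and leaves the common factor $e^{i2\pi\Delta\omega\cdot h}$ in every summand, independent of $k$.

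The remaining step, and the only genuine technical obstacle, is to interchange the sum over $k$ with the integral over $\T_\Delta^d$, so as to factor out $e^{i2\pi\Delta\omega\cdot h}$ and recognize $\sum_{k}\mcA(\omega + k/\Delta)$ as the aliased spectral density. I expect to justify this by Fubini's theorem after checking absolute summability: since $\mcA$ is nonnegative by Bochner's theorem, Tonelli gives
\begin{align*}
\sum_{k \in \Z^d} \int_{\T_\Delta^d} \big| \mcA(\omega + k/\Delta)\, e^{i2\pi\Delta\omega\cdot h} \big|\, d\omega = \int_{\R^d} \mcA(\omega)\, d\omega = A\sbl 0 \sbr < \infty,
\end{align*}
so the interchange is valid. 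Pulling the exponential out of the sum then produces exactly the claimed identity. No step beyond this elementary measure-theoretic bookkeeping is needed, so I anticipate the proof to be short once the change of variables and the integer-lattice cancellation are spelled out.
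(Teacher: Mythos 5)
Your proof is correct and follows essentially the same route as the paper's: decompose the Fourier inversion integral over $\R^d$ into translates of the cell $\T_\Delta^d$, shift each piece back by a change of variables, and use the fact that $k \cdot h \in \Z$ makes the factor $e^{i2\pi k \cdot h}$ equal to $1$. The only difference is that you explicitly justify the sum--integral interchange via Tonelli (using nonnegativity of $\mcA$ and $\int_{\R^d}\mcA(\omega)\,d\omega = A\sbl 0 \sbr < \infty$), a step the paper performs without comment, so your write-up is if anything slightly more careful.
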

\begin{proof}
Using Bochner's theorem and splitting the integral into domains of size $\T_\Delta^d$,
\begin{align}
A \sqb{\Delta h} = 
\sum_{k \in \Z^d} 
\int_{\T^d_\Delta} A(\omega + k/\Delta) \exp(i2\pi (\omega+k/\Delta) \cdot \Delta h) d\omega.
\end{align}
Exchanging sum with integral gives
\begin{align}
A \sqb{\Delta h} = 
 \int_{\T^d_\Delta} 
\sum_{k \in \Z^d}
A(\omega + k/\Delta) \exp(i2\pi (\omega+k/\Delta) \cdot \Delta h) d\omega.
\end{align}
The complex exponential can be expanded as
\begin{align}
\exp(i 2\pi(\omega + k/\Delta) \cdot \Delta h) 
= \exp(i 2\pi \Delta \omega \cdot h + i2\pi k \cdot h)
= \exp(i 2\pi \Delta \omega \cdot h)\exp(i 2\pi k \cdot h ).
\end{align}
Since $k \in \Z^2$ and $h\in \Z^2$, $k \cdot h = n$ is an integer, and
thus the complex exponential is
\begin{align}
\exp(i2\pi \Delta \omega \cdot h) \exp(i 2\pi n) = \exp(i 2\pi \Delta \omega \cdot h).
\end{align}
Plugging this expression back into the integral gives
\begin{align}
A \sqb{\Delta h} = \int_{\T^d_\Delta} 
\Big[ \sum_{k \in \Z^d} A( \omega + k/\Delta ) \Big] 
\exp( i 2\pi \Delta \omega \cdot h ) d\omega,
\end{align}
as desired.
\end{proof}

The second proposition establishes the reciprocal relationship between
the spectral density of the covariance operator and the spectral
density of the inverse operator.

\begin{proposition}
(Reciprocal Relationship of Inverse) For all $\omega \in \T_\Delta^d$,
\begin{align}
A_\Delta(\omega) A_\Delta^{-1}(\omega) = \Delta^d .
\end{align}
\end{proposition}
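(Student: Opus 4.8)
The plan is to take the infinite discrete Fourier transform of both sides of the defining relation \eqref{inverse_def}, exactly as suggested in the sentence preceding the statement. Recall that the forward transform here carries a $\Delta^d$ prefactor, namely $A_\Delta(\omega) = \Delta^d \sum_{h \in \Z^d} A_\Delta\sqb{h}\, e^{-i2\pi\Delta\omega\cdot h}$, so I would apply the operator $\Delta^d \sum_{h \in \Z^d} (\cdot)\, e^{-i2\pi\Delta\omega\cdot h}$ to each side of \eqref{inverse_def} and match the results.

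First I would handle the right-hand side, which is immediate: transforming $\mathbbm{1}\sqb{h}$ leaves only the $h=0$ term, yielding $\Delta^d$. The left-hand side is a discrete convolution scaled by $\Delta^d$. Transforming it produces a double sum over $h$ and $k$; the substitution $m = h - k$ decouples the two indices, and the exponential factors as $e^{-i2\pi\Delta\omega\cdot(m+k)} = e^{-i2\pi\Delta\omega\cdot m}\, e^{-i2\pi\Delta\omega\cdot k}$. This lets me write the transformed left-hand side as a product of two single sums, one a transform of $A_\Delta\sqb{}$ and the other of $A_\Delta^{-1}\sqb{}$.

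Carefully tracking the $\Delta^d$ normalizations is the only point requiring attention. Each single sum equals $\Delta^{-d}$ times the corresponding spectral density, and together with the two explicit $\Delta^d$ factors (one from \eqref{inverse_def}, one from the transform operator) these combine to give precisely $A_\Delta(\omega)A_\Delta^{-1}(\omega)$. Equating the two sides then yields $A_\Delta(\omega)A_\Delta^{-1}(\omega) = \Delta^d$, as claimed.

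The main (and essentially only) obstacle is justifying the interchange in the order of summation when collapsing the double sum, which is valid provided both $A_\Delta\sqb{}$ and $A_\Delta^{-1}\sqb{}$ are absolutely summable on $\Z^d$; for the \matern\ covariances under study this decay holds. Beyond that bookkeeping, the argument is a direct application of the convolution theorem for the infinite discrete Fourier transform.
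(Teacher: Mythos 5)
Your proposal is correct and follows essentially the same route as the paper's own proof: take the infinite discrete Fourier transform (with its $\Delta^d$ prefactor) of both sides of \eqref{inverse_def}, factor the convolution into a product of two transforms via the substitution $m = h-k$, and track the normalizing powers of $\Delta^d$, which cancel to leave $A_\Delta(\omega)A_\Delta^{-1}(\omega) = \Delta^d$. Your added remark on absolute summability justifying the interchange of sums is a reasonable refinement the paper leaves implicit, but it does not change the argument.
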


\begin{proof}
By definition, the inverse satisfies
\begin{align}
\Delta^d \sum_{k \in \Z^d} A_\Delta\sqb{h-k} A_\Delta^{-1}\sqb{k} = \mathbbm{1}\sqb{h}.
\end{align}
Take the infinite DFT of both sides,
\begin{align}
\Delta^{2d} \sum_{h \in \Z^d} \sum_{k \in \Z^d} 
A_\Delta \sqb{h-k} A_\Delta^{-1}\sqb{k}e^{-i2\pi\Delta\omega \cdot h}
&= \Delta^d \sum_{h \in \Z^d} \mathbbm{1}\sqb{h}e^{-i2\pi\Delta\omega \cdot h} \\  
 \Delta^d \sum_{k \in \Z^d} A_\Delta^{-1}\sqb{k} e^{-i2\pi\Delta\omega \cdot k} 
 \Delta^d\sum_{h \in \Z^d} A_\Delta\sqb{h-k} e^{-i2\pi\Delta\omega \cdot (h-k)} 
&= \Delta^d \\
A_\Delta^{-1}(\omega) A_\Delta(\omega) &= \Delta^d.
\end{align}
\end{proof}

\begin{proposition}(Square Root spectral density)
For all $\omega \in \T_\Delta^d$,
\begin{align}
A_\Delta^{1/2}(\omega) A_\Delta^{1/2}(\omega)^* = A_\Delta(\omega).
\end{align}
\end{proposition}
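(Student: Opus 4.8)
The plan is to mirror the proof of the Reciprocal Relationship proposition, replacing the defining identity \eqref{inverse_def} with the square-root identity \eqref{opmult_transpose}. First I would take the infinite discrete Fourier transform of both sides of \eqref{opmult_transpose}, i.e.\ multiply by $\Delta^d e^{-i2\pi\Delta\omega\cdot h}$ and sum over $h \in \Z^d$. The left-hand side collapses immediately to $A_\Delta(\omega)$ by the definition of the spectrum given in the background.

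The right-hand side is a discrete convolution of $A^{1/2}_\Delta\sqb{\cdot}$ with its reflection $A^{1/2}_\Delta\sqb{-\cdot}$, so the double sum should factor into a product of two transforms. Concretely, I would split the exponential as $e^{-i2\pi\Delta\omega\cdot h} = e^{-i2\pi\Delta\omega\cdot(h-k)} e^{-i2\pi\Delta\omega\cdot k}$ and reorder the summations, exactly as in the Reciprocal Relationship proof. The inner sum over $h$, reindexed by $m = h-k$, yields $A^{1/2}_\Delta(\omega)$; the outer sum over $k$, reindexed by $n = -k$, yields $\Delta^d \sum_n A^{1/2}_\Delta\sqb{n} e^{+i2\pi\Delta\omega\cdot n}$, which is the transform of the square-root operator evaluated at $-\omega$.

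The one step that needs care — and the main (mild) obstacle — is identifying this reflected transform with the complex conjugate $A^{1/2}_\Delta(\omega)^*$ rather than merely $A^{1/2}_\Delta(-\omega)$. This is where I would invoke that the square-root operator is real-valued (it is the real convolution kernel generating the real process $Y$ in \eqref{convolution}), so that its spectrum enjoys the Hermitian symmetry $A^{1/2}_\Delta(-\omega) = A^{1/2}_\Delta(\omega)^*$. With that identification the right-hand side becomes $A^{1/2}_\Delta(\omega) A^{1/2}_\Delta(\omega)^*$, completing the proof. I would close by remarking that the resulting identity says $|A^{1/2}_\Delta(\omega)|^2 = A_\Delta(\omega) > 0$, consistent with $A^{1/2}_\Delta(\omega)$ being the complex square root of the real, positive spectrum and with the phase ambiguity noted in the text.
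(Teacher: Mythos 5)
Your proposal is correct and follows essentially the same route as the paper's proof: take the infinite DFT of both sides of \eqref{opmult_transpose}, split the exponential as $e^{-i2\pi\Delta\omega\cdot h} = e^{-i2\pi\Delta\omega\cdot(h-k)}e^{-i2\pi\Delta\omega\cdot k}$, and factor the double sum into the transform of $A_\Delta^{1/2}\sqb{\cdot}$ times the transform of its reflection. The only difference is that you explicitly justify identifying the reflected transform with $A_\Delta^{1/2}(\omega)^*$ via real-valuedness of the kernel (Hermitian symmetry), a step the paper performs silently; this is a sound and slightly more careful rendering of the same argument.
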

\begin{proof}
By definition, the square root operator satisfies
\begin{align}
\Delta^d \sum_{k \in \Z^d} A_\Delta^{1/2} \sqb{h-k} A_\Delta^{1/2}\sqb{-k} = A_\Delta\sqb{h}.
\end{align}
Taking the infinite DFT of both sides,
\begin{align}
\Delta^{2d} \sum_{h \in \Z^d} \sum_{k \in \Z^d} A_\Delta^{1/2} \sqb{h-k} A_\Delta^{1/2}\sqb{-k} 
e^{-i2\pi\Delta\omega \cdot h}
&= \Delta^d \sum_{h \in \Z^d} A_\Delta\sqb{h}e^{-i2\pi\Delta\omega \cdot h} \\
 \Delta^d \sum_{k \in \Z^d} A_\Delta^{1/2}\sqb{-k} e^{-i2\pi\Delta\omega \cdot k} 
 \Delta^{d} \sum_{h \in \Z^d}  A_\Delta^{1/2} \sqb{h-k} e^{-i2\pi\Delta\omega \cdot (h-k)} 
&= A_\Delta(\omega) \\
A_\Delta^{1/2}(\omega)^* A_\Delta^{1/2}(\omega) &= A_\Delta(\omega).
\end{align}
\end{proof}

\begin{proposition} (Convolution Method of Simulation)
If the mean-zero process $W : \Z^d \to \R$ has covariance operator $\mathbbm{1}\sqb{h}$, and
\begin{align}
Y\sqb{\Delta h} = \Delta^{d/2} \sum_{k\in\Z^d} A_\Delta^{1/2} \sqb{h-k} W\sqb{k},
\end{align}
then $\Cov(Y\sqb{\Delta(h)}, Y\sqb{\Delta j}) = A_\Delta\sqb{h-j}$.
\end{proposition}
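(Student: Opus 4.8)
The plan is to compute the covariance directly from the convolution representation of $Y$, using the white noise structure of $W$ to collapse a double sum and then matching the result to the defining relation \eqref{opmult_transpose} of the square root operator. Because $W$ is mean-zero, each $Y\sqb{\Delta h}$ is a mean-zero linear combination of the $W\sqb{k}$, so the covariance equals the expectation of the product $Y\sqb{\Delta h}\,Y\sqb{\Delta j}$. Substituting the representation for each factor produces a double sum over $k$ and $\ell$ whose summand is proportional to $A_\Delta^{1/2}\sqb{h-k}\,A_\Delta^{1/2}\sqb{j-\ell}\,E\{W\sqb{k}\,W\sqb{\ell}\}$.

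First I would invoke the white noise covariance $E\{W\sqb{k}\,W\sqb{\ell}\} = \mathbbm{1}\sqb{k-\ell}$, which vanishes unless $k = \ell$; this collapses the double sum into the single sum $\Delta^d \sum_{k \in \Z^d} A_\Delta^{1/2}\sqb{h-k}\,A_\Delta^{1/2}\sqb{j-k}$. The final step is a reindexing: substituting $m = k - j$ rewrites this as $\Delta^d \sum_{m \in \Z^d} A_\Delta^{1/2}\sqb{(h-j)-m}\,A_\Delta^{1/2}\sqb{-m}$, which is precisely $A_\Delta\sqb{h-j}$ by \eqref{opmult_transpose}. This delivers the claim, and as a byproduct shows that the covariance depends only on the lag $h-j$, confirming that $Y$ is stationary with the advertised covariance operator.

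The only subtlety is justifying the interchange of expectation with the two infinite sums and the reindexing of the resulting series. This is legitimate provided the operator $A_\Delta^{1/2}$ is square-summable, which holds because its spectral density is the complex square root of the bounded and integrable aliased spectral density $A_\Delta(\omega)$; Fubini's theorem and absolute convergence then validate each manipulation. I expect this convergence bookkeeping to be the only nonroutine part, since the algebraic identity itself follows immediately from the definition \eqref{opmult_transpose} of the square root operator.
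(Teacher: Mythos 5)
Your proof is correct and follows essentially the same route as the paper's: expand the covariance as the expectation of the product, collapse the double sum using the white-noise covariance $\mathbbm{1}\sqb{k-\ell}$, and reindex the remaining single sum to match the defining relation \eqref{opmult_transpose} of $A_\Delta^{1/2}$. Your additional remark on justifying the interchange of expectation and summation via square-summability of $A_\Delta^{1/2}$ is a careful touch that the paper's proof leaves implicit.
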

\begin{proof}
\begin{align}
\Cov( Y\sqb{\Delta h}, Y\sqb{\Delta j} ) 
&= E \Big[ \Delta^{d/2} \sum_{k \in \Z^d} A_\Delta^{1/2}\sqb{h-k}W\sqb{k} 
           \Delta^{d/2} \sum_{m \in \Z^d} A_\Delta^{1/2}\sqb{j-m}W\sqb{m} \Big] \\
&= \Delta^d \sum_{k \in \Z^d} A_\Delta^{1/2}\sqb{h-k} A_\Delta^{1/2}\sqb{j-k} \\
&= \Delta^d \sum_{\ell \in \Z^d} A_\Delta^{1/2}\sqb{h-j-\ell}A_\Delta^{1/2}\sqb{-\ell} \\
&= A_\Delta\sqb{h-j}.
\end{align}
\end{proof}

\begin{proposition}
If the mean-zero process $Y : (\Delta \Z)^d \to \R$ has covariance operator $A_\Delta$, and
\begin{align}
W\sqb{h} = \Delta^{d/2} \sum_{k \in \Z^d} A_\Delta^{-1/2}\sqb{h-k} Y\sqb{\Delta k},
\end{align}
then $\Cov( W\sqb{h}, W\sqb{j} ) = \mathbbm{1}\sqb{h-j}$.
\end{proposition}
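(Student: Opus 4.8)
The plan is to mirror the proof of the Convolution Method proposition, but in reverse, finishing in the spectral domain where the reciprocal and square-root relationships already established do the real work. First I would expand the covariance using the bilinearity of $\Cov$ and the fact that $W$ is a linear functional of $Y$,
\begin{align*}
\Cov\big( W\sqb{h}, W\sqb{j} \big)
= \Delta^d \sum_{k \in \Z^d} \sum_{m \in \Z^d}
A_\Delta^{-1/2}\sqb{h-k}\, A_\Delta^{-1/2}\sqb{j-m}\,
\Cov\big( Y\sqb{\Delta k}, Y\sqb{\Delta m} \big),
\end{align*}
and then substitute $\Cov\big( Y\sqb{\Delta k}, Y\sqb{\Delta m} \big) = A_\Delta\sqb{k-m}$, the assumed covariance of $Y$. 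This leaves a single $\Delta^d$ in front of a double sum over a product of three lattice operators.

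The next step is to observe that each of the three factors is a function of a difference of indices, so the double sum depends on $h$ and $j$ only through the lag $h-j$; rewriting $A_\Delta^{-1/2}\sqb{j-m}$ as the reflected operator evaluated at $m-j$ exhibits the whole expression as a three-fold operator composition in the $\Delta^d$-convolution sense of the background section. I would then take the infinite discrete Fourier transform in the lag, exactly as in the proofs of the reciprocal and square-root propositions. Since the transform sends each $\Delta^d$ convolution to a product of spectral densities, the result factors, up to the appropriate power of $\Delta$, into the spectral density of $A_\Delta^{-1/2}$, the spectral density $A_\Delta(\omega)$, and the transform of the reflected inverse-square-root operator, which is the complex conjugate $A_\Delta^{-1/2}(\omega)^*$.

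At this point the two earlier propositions close the argument. The Square Root spectral density proposition, applied to $A_\Delta^{-1}$, gives $A_\Delta^{-1/2}(\omega)\, A_\Delta^{-1/2}(\omega)^* = A_\Delta^{-1}(\omega)$, and the reciprocal relationship gives $A_\Delta^{-1}(\omega)\, A_\Delta(\omega) = \Delta^d$. Combining these shows that the spectral density of $W$ is constant in $\omega$, and a constant spectrum is precisely the infinite discrete Fourier transform of $\mathbbm{1}\sqb{h}$; inverting the transform then yields $\Cov\big( W\sqb{h}, W\sqb{j} \big) = \mathbbm{1}\sqb{h-j}$, so that $W$ is white noise.

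The part that needs care — and the main obstacle — is the bookkeeping of the reflection together with the powers of $\Delta$. One must check that reflecting $A_\Delta^{-1/2}\sqb{j-m}$ genuinely contributes the conjugate $A_\Delta^{-1/2}(\omega)^*$ rather than $A_\Delta^{-1/2}(\omega)$, since it is exactly this conjugation that lets the two inverse-square-root factors combine through the square-root relationship for $A_\Delta^{-1}$, and one must track the $\Delta^d$ factors through the three-fold product so that the surviving constant is the correct power of $\Delta$. A purely time-domain route is also available — represent $Y$ itself through a white noise via the Convolution Method proposition, substitute, and collapse the inner sum to the composition of $A_\Delta^{-1/2}$ with $A_\Delta^{1/2}$ — but this route still terminates in the same fact, namely that $A_\Delta^{-1/2}(\omega)\, A_\Delta^{1/2}(\omega)$ has constant modulus, so I would prefer the direct spectral computation above.
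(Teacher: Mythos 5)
Your route is sound in outline but genuinely different from the paper's. The paper never leaves the lattice domain: after the same opening expansion, it substitutes $A_\Delta\sqb{m-k} = \Delta^d \sum_{\ell} A_\Delta^{1/2}\sqb{m-k-\ell}A_\Delta^{1/2}\sqb{-\ell}$ (the definition of the square root operator), reorders the sums, and then collapses twice using the identity $\Delta^d \sum_{n} A_\Delta^{1/2}\sqb{s-n}A_\Delta^{-1/2}\sqb{n} = \mathbbm{1}\sqb{s}$ --- in other words, the paper carries out essentially the ``purely time-domain route'' that you set aside at the end of your proposal. Your spectral version buys the re-use of the reciprocal and square-root propositions as black boxes and avoids the double reindexing of sums; the paper's version avoids Fourier inversion and any interchange of sum and integral. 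Your treatment of the reflection is also correct: since the coefficients are real, the reflected operator transforms to $A_\Delta^{-1/2}(\omega)^*$, which is exactly what allows the square-root proposition for $A_\Delta^{-1}$ to apply.

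The caveat you flagged --- tracking the powers of $\Delta$ --- is not a formality; it is the one place where your argument does not close under the stated definitions, and (perhaps reassuringly) the paper's proof conceals the identical wrinkle. Carrying your computation through the paper's conventions gives
\begin{align*}
\Delta^d \sum_{s \in \Z^d} \Cov\big(W\sqb{h+s},W\sqb{h}\big)e^{-i2\pi\Delta\omega\cdot s}
= \Delta^{-d}\, A_\Delta^{-1/2}(\omega)\, A_\Delta^{-1/2}(\omega)^*\, A_\Delta(\omega)
= \Delta^{-d}\, A_\Delta^{-1}(\omega)\, A_\Delta(\omega) = 1,
\end{align*}
which is constant, so $W$ is indeed white up to scale; but the transform of $\mathbbm{1}\sqb{\cdot}$ in this convention is $\Delta^d$, not $1$, so inverting yields $\Cov(W\sqb{h},W\sqb{j}) = \Delta^{-d}\mathbbm{1}\sqb{h-j}$. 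The discrepancy traces to the fact that, under the paper's $\Delta^d$-weighted normalization, the square root of the inverse (spectral density of modulus $\sqrt{\Delta^d/A_\Delta(\omega)}$, which is what the hypothesis of the proposition hands you) and the inverse of the square root (spectral density of modulus $\Delta^d/\sqrt{A_\Delta(\omega)}$, which is what the collapse identity in the paper's proof requires) differ by the factor $\Delta^{d/2}$. The paper silently adopts the second reading of $A_\Delta^{-1/2}$; with that reading both your argument and the paper's deliver exactly $\mathbbm{1}\sqb{h-j}$, and when $\Delta = 1$ the two readings coincide so nothing is visibly wrong. So to finish your proof you must either switch to the inverse-of-the-square-root interpretation of $A_\Delta^{-1/2}$ or accept the extra factor $\Delta^{-d}$ in the conclusion; as written, the deferred bookkeeping would surface precisely this mismatch rather than resolve itself.
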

\begin{proof}
\begin{align}
\Cov( &W\sqb{h}, W\sqb{j} ) \\ 
&= E\Big[ \Delta^{d/2} \sum_{k \in \Z^d} A_\Delta^{-1/2}\sqb{h-k} Y\sqb{\Delta k}
          \Delta^{d/2} \sum_{m \in \Z^d} A_\Delta^{-1/2}\sqb{j-m} Y\sqb{\Delta m} \Big] \\
&= \Delta^d \sum_{k \in \Z^d} \sum_{m \in \Z^d} 
A_\Delta^{-1/2}\sqb{h-k} A_\Delta^{-1/2}\sqb{j-m} A_\Delta\sqb{m-k} \\
&= \Delta^d \sum_{k \in \Z^d} \sum_{m \in \Z^d} 
A_\Delta^{-1/2}\sqb{h-k} A_\Delta^{-1/2}\sqb{j-m} 
\Delta^d \sum_{\ell \in \Z^d} A_\Delta^{1/2}\sqb{m-k-\ell} A_\Delta^{1/2}\sqb{-\ell} \\
&= \Delta^d \sum_{k \in \Z^d} \sum_{\ell \in \Z^d} 
A_\Delta^{-1/2}\sqb{h-k}  A_\Delta^{1/2}\sqb{-\ell}
\Delta^d \sum_{m \in \Z^d} A_\Delta^{1/2}\sqb{m-k-\ell} A_\Delta^{-1/2}\sqb{j-m} \\
&= \Delta^d \sum_{k \in \Z^d} \sum_{\ell \in \Z^d} 
A_\Delta^{-1/2}\sqb{h-k}  A_\Delta^{1/2}\sqb{-\ell}
\Delta^d \sum_{n \in \Z^d} A_\Delta^{1/2}\sqb{j-k-\ell-n} A_\Delta^{-1/2}\sqb{n} \\
&= \Delta^d \sum_{k \in \Z^d} \sum_{\ell \in \Z^d} 
A_\Delta^{-1/2}\sqb{h-k}  A_\Delta^{1/2}\sqb{-\ell} 
\mathbbm{1}\sqb{j-k-\ell} \\
&= \Delta^d \sum_{k \in \Z^d} 
A_\Delta^{-1/2}\sqb{h-k}  A_\Delta^{1/2}\sqb{k-j} \\
&= \mathbbm{1}\sqb{h-j}
\end{align}
\end{proof}

\setcounter{theorem}{0}

\section{Proofs for Mat\'ern Model}\label{proofs}

This section contains proofs of Theorems 1-4 from the main
document. Theorems 1-3 have precise coefficients on higher-order
terms that were omitted from the main document to conserve space.  We
first state a lemma that is a consequence of Taylor's Theorem:
\begin{lemma} For $x > 0$, 
\begin{align*}
(1 + x)^{-1} &= 1 - x + x^2 c(x), \quad 0 < c(x) < 1 \\
(1 + x)^{-2} &= 1 - 2x + x^2 d(x), \quad 0 < d(x) < 3
\end{align*}
\end{lemma}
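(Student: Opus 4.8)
The plan is to treat each identity as a second-order Taylor expansion about $x=0$ with Lagrange remainder, since that is exactly the ``consequence of Taylor's theorem'' the lemma advertises; the two claimed inequalities then reduce to bounds on the remainder coefficient. I would also keep the exact algebraic remainder in reserve as an independent cross-check, since for these two specific functions the remainder admits a closed form.

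For the first identity, set $f(x) = (1+x)^{-1}$, so that $f(0)=1$, $f'(0) = -1$, and $f''(x) = 2(1+x)^{-3}$. Taylor's theorem with Lagrange remainder gives $f(x) = 1 - x + \tfrac{1}{2} f''(\xi) x^2$ for some $\xi \in (0,x)$, i.e.\ $c(x) = (1+\xi)^{-3}$. Because $x > 0$ forces $\xi > 0$, we have $1 + \xi > 1$ and hence $0 < c(x) < 1$. As a check, the closed-form remainder $(1+x)^{-1} = 1 - x + x^2/(1+x)$ yields $c(x) = 1/(1+x)$, which also lies strictly in $(0,1)$ for $x>0$; the two representations differ but both satisfy the stated bounds, which is all the lemma requires, since $c$ is not claimed to be unique.

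For the second identity, set $g(x) = (1+x)^{-2}$, so $g(0)=1$, $g'(0) = -2$, and $g''(x) = 6(1+x)^{-4}$. The same expansion gives $g(x) = 1 - 2x + \tfrac{1}{2} g''(\xi) x^2$ with $d(x) = 3(1+\xi)^{-4}$ for some $\xi \in (0,x)$. Again $\xi > 0$ gives $0 < (1+\xi)^{-4} < 1$, hence $0 < d(x) < 3$. The closed-form check is $(1+x)^{-2} = 1 - 2x + x^2 (3+2x)/(1+x)^2$, so $d(x) = (3+2x)/(1+x)^2$, and the upper bound $d(x) < 3$ reduces to $3 + 2x < 3(1+x)^2$, i.e.\ $0 < x(4+3x)$, which holds for all $x>0$.

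I do not expect a genuine obstacle here; the only point requiring care is that both bounds are strict and that the constant $3$ in the second is tight. The strictness comes entirely from $x > 0 \Rightarrow \xi > 0$, and the value $3$ is exactly the second-order Taylor coefficient $g''(0)/2$, which the remainder $3(1+\xi)^{-4}$ never attains for $\xi > 0$ but approaches as $x \to 0^+$. I would present the Lagrange-remainder version as the main argument and relegate the closed-form identities to a parenthetical verification.
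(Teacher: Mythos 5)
Your proof is correct and takes essentially the same route as the paper: both apply Taylor's theorem with Lagrange remainder to $(1+x)^{-1}$ and $(1+x)^{-2}$, obtaining $c(x) = (1+\xi)^{-3}$ and $d(x) = 3(1+\xi)^{-4}$ for some $\xi \in (0,x)$, and deducing the strict bounds from $\xi > 0$. Your closed-form remainder identities are a worthwhile independent check but are not needed; the Lagrange-remainder argument alone is exactly the paper's proof.
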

\begin{proof}
By Taylor's Theorem,
$(1 + x)^{-1} = 1 - x + x^2 (1+a(x))^{-3}$,
where $0 < a(x) < x$. Therefore,
$(1 + x)^{-1} = 1 - x + x^2 c(x), \mbox{where} \quad 0 < c(x) < 1$.
Similarly, $(1 + x)^{-2} = 1 - 2 x + 3 x^2 (1+b(x))^{-4}$,
where $0 < b(x) < x$. Therefore,
$(1 + x)^{-2} = 1 - 2 x + x^2 d(x), \mbox{where} \quad 0 < d(x) < 3$.
\end{proof}

\begin{theorem}\label{specden_1o2_app}
\begin{align*}
\frac{M_\Delta( 0 \,;\, 1/2, 1 )}{2/\alpha} &= 1 + O(\alpha^2 \Delta^2)  
&\frac{M_\Delta\big( \frac{1}{2\Delta} \,;\, 1/2, 1 \big)}{2/\alpha} &= \frac{\alpha^2\Delta^2}{4} + O(\alpha^4 \Delta^4) \\
\frac{\Mt_\Delta( 0 \,;\, 1/2, 1 )}{2/\alpha} &=  1 
&\frac{\Mt_\Delta\big( \frac{1}{2\Delta} \,;\, 1/2, 1 \big)}{2/\alpha} &= \frac{\alpha^2\Delta^2}{4} + O(\alpha^4\Delta^4).
\end{align*}
\end{theorem}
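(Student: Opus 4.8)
The plan is to work directly from the two closed-form spectral densities already established in the excerpt: equation \eqref{specden_12_1} for $M_\Delta(\omega\,;\,1/2,1)$ and equation \eqref{spde_spec_12_1} for $\Mt_\Delta(\omega\,;\,1/2,1)$. Each of the four claims is just an evaluation of one of these at $\omega = 0$ or $\omega = 1/(2\Delta)$, followed by a small-$\alpha\Delta$ expansion. The only inputs beyond algebra are the two elementary facts $e^{\pm i\omega 2\pi\Delta} = 1$ at $\omega = 0$ and $e^{\pm i\omega 2\pi\Delta} = e^{\pm i\pi} = -1$ at $\omega = 1/(2\Delta)$; substituting these collapses every complex exponential to $\pm 1$ and renders all four expressions real and rational (or, for the true density, a ratio of exponentials) in the single variable $x := \alpha\Delta$.

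First I would dispatch the two SPDE rows, which are exact. Substituting $e^{\pm i\omega 2\pi\Delta} = 1$ into the denominator of \eqref{spde_spec_12_1} leaves precisely $\alpha\Delta/2$, so $\Mt_\Delta(0\,;\,1/2,1) = 2/\alpha$ and the normalized value is $1$ with no remainder. Substituting $-1$ turns the denominator into $\alpha\Delta/2 + 2/(\alpha\Delta) = (x^2+4)/(2x)$, giving the exact rational value $\Mt_\Delta(1/(2\Delta)\,;\,1/2,1)/(2/\alpha) = x^2/(x^2+4)$. Writing this as $(x^2/4)(1 + x^2/4)^{-1}$ and applying the first identity of Lemma 1 with argument $x^2/4$ yields $x^2/4 + O(x^4)$, i.e.\ the claimed $\alpha^2\Delta^2/4 + O(\alpha^4\Delta^4)$ with an explicit remainder bound.

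Next I would handle the true density. At $\omega = 0$ the denominator of \eqref{specden_12_1} factors as $1 + e^{-2x} - 2e^{-x} = (1 - e^{-x})^2$, and using $1 - e^{-2x} = (1-e^{-x})(1+e^{-x})$ in the numerator cancels one factor, giving $M_\Delta(0\,;\,1/2,1) = \Delta(1 + e^{-x})/(1 - e^{-x})$, so that the normalized quantity equals $(x/2)\coth(x/2)$. At $\omega = 1/(2\Delta)$ the denominator becomes $(1 + e^{-x})^2$ and the same cancellation produces the normalized value $(x/2)\tanh(x/2)$. Expanding $\coth(x/2) = 2/x + x/6 + O(x^3)$ and $\tanh(x/2) = x/2 + O(x^3)$ then gives $1 + O(x^2)$ and $x^2/4 + O(x^4)$ respectively, matching the two true-density rows.

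The computations are short, so the only real work is controlling the remainders rigorously rather than merely formally. For the SPDE rows this is immediate from Lemma 1. For the true-density rows I would avoid invoking the hyperbolic series as a black box and instead expand $e^{-x}$ to the needed order with a Lagrange remainder, then divide; the mild subtlety is that $1 - e^{-x}$ vanishes to first order in $x$, so obtaining a clean $O(x^2)$ (not $O(x)$) bound on the relative error requires expanding numerator and denominator one order past the leading term and tracking the cancellation. This bookkeeping is the main, though entirely routine, obstacle.
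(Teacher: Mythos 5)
Your proposal is correct, and for the two true-density rows it takes a genuinely different route than the paper. For the SPDE rows the two arguments coincide: direct substitution of $e^{\pm i 2\pi\omega\Delta}=\pm 1$ into \eqref{spde_spec_12_1} followed by Lemma~1, exactly as in the paper's proof. For the Mat\'ern rows, however, the paper never touches the closed form \eqref{specden_12_1}; it works from the lattice-sum definition of the aliased density, writes $M_\Delta(0\,;\,1/2,1)\alpha/2 = 1 + \tfrac{\alpha^2\Delta^2}{2\pi^2}\sum_{k\geq 1}k^{-2}\bigl(1+\tfrac{\alpha^2\Delta^2}{4\pi^2k^2}\bigr)^{-1}$ (and the analogous odd-integer sum at $\omega = \tfrac{1}{2\Delta}$), applies Lemma~1 termwise, and evaluates the resulting series $\sum_{k\geq 1}k^{-2}=\pi^2/6$, $\sum_{k\geq 0}(2k+1)^{-2}=\pi^2/8$, etc. Your route instead exploits the closed form to reduce everything to the exact identities $M_\Delta(0)\alpha/2=(x/2)\coth(x/2)$ and $M_\Delta(\tfrac{1}{2\Delta})\alpha/2=(x/2)\tanh(x/2)$ with $x=\alpha\Delta$, which is slicker: the expansions $1+x^2/12-x^4/720+\cdots$ and $x^2/4-x^4/48+\cdots$ drop out to all orders and reproduce precisely the higher-order coefficients the paper obtains from its zeta-type sums, a nice consistency check. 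What you give up is twofold. First, you lean on \eqref{specden_12_1}, whose derivation the paper only asserts (``can be proven by taking the discrete Fourier transform of the covariance function''), so a fully self-contained write-up would need that DFT computation, whereas the paper's series argument needs nothing beyond the definition of aliasing. Second, and more importantly, your method is special to $d=1$, $\nu=1/2$: no closed form of the aliased density exists for $\nu=3/2$ or $d=2$, so the paper's termwise-expansion-plus-series-evaluation machinery is the one that carries over verbatim to Theorems~2 and~3, which is presumably why the author uses it here even though your shortcut is available. Your plan for remainder control (Lagrange form for $e^{-x}$, expanding one order past the leading term to absorb the first-order cancellation in $1-e^{-x}$) is sound and closes the only real gap between a formal and a rigorous version of your argument.
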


\begin{proof}
We establish each of the four relations in turn. Rearranging terms gives
\begin{align*}
\frac{M_\Delta(0 : 1/2, 1)}{2/\alpha} = 
    \alpha^2 \sum_{k \in \Z} \Big( \alpha^2 + 4\pi^2k^2/\Delta^2 \Big)^{-1} 
  = 1 + 
    \frac{\alpha^2 \Delta^2}{2\pi^2} 
    \sum_{k=1}^{\infty} \frac{1}{k^2} 
    \Big( 1 + \frac{\alpha^2 \Delta^2}{4 \pi^2 k^2} \Big)^{-1}.
\end{align*}
Applying Lemma 1,
\begin{align*}
\frac{M_\Delta(0 : 1/2, 1)}{2/\alpha} 
 = 1 + 
    \frac{\alpha^2 \Delta^2}{2\pi^2} 
    \sum_{k=1}^{\infty} \frac{1}{k^2} 
    \Big( 1 - \frac{\alpha^2 \Delta^2}{4 \pi^2 k^2} 
    + q_k\frac{\alpha^4 \Delta^4}{16 \pi^4 k^4} \Big),
\end{align*}
where $0 < q_k < 1$. Evaluating the sums gives
\begin{align*}
\frac{M_\Delta(0 : 1/2, 1)}{2/\alpha} 
 = 1 + 
    \frac{\alpha^2 \Delta^2}{12} 
  - \frac{\alpha^4 \Delta^4}{720}
  + O(\alpha^6 \Delta^6).
\end{align*}

The second relation follows directly from plugging $\omega = 0$ into the 
SPDE spectral density. For the third relation, rearranging terms gives
\begin{align*}
\frac{M_\Delta( \Delta^{-1}/2 : 1/2, 1 )}{2/\alpha} 
    &= \alpha^2 \sum_{k \in \Z} 
      \Big[ \alpha^2 + 4\pi^2 \Big(\frac{1}{2\Delta} + \frac{k}{\Delta} \Big)^2 \Big]^{-1}\\
    &= \frac{2\alpha^2\Delta^2}{\pi^2} 
      \sum_{k=0}^\infty 
      \frac{1}{(2k + 1)^2}
       \Big[ 1 + \frac{\alpha^2 \Delta^2}{\pi^2} \frac{1}{(2k + 1 )^2} \Big]^{-1}.
\end{align*}
Applying Lemma 1,
\begin{align*}
\frac{M_\Delta( \Delta^{-1}/2 : 1/2, 1 )}{2/\alpha} 
  = \frac{2\alpha^2\Delta^2}{\pi^2}
    \sum_{k=0}^\infty 
    \frac{1}{(2k+1)^2}\Big[ 1 - \frac{\alpha^2\Delta^2}{\pi^2} \frac{1}{(2k+1)^2} + 
    p_k \frac{\alpha^4\Delta^4}{\pi^4} \frac{1}{(2k+1)^4} \Big],
\end{align*}
where $0 < p_k < 1$. Evaluating the sums,
\begin{align*}
\frac{M_\Delta( \Delta^{-1}/2 : 1/2, 1 )}{2/\alpha} 
  = \frac{\alpha^2\Delta^2}{4}
  - \frac{\alpha^4\Delta^4}{48}
  + O(\alpha^6 \Delta^6),
\end{align*}
establishing the third relation. For the fourth relation,
\begin{align*}
\frac{\Mt( \Delta^{-1}/2 : 1/2, 1 ) }{2/\alpha} 
  = \frac{\alpha \Delta}{2}\Big( \frac{\alpha\Delta}{2} + \frac{2}{\alpha\Delta} \Big)^{-1}
  = \frac{\alpha^2\Delta^2}{4}\Big( 1 + \frac{\alpha^2\Delta^2}{4} \Big)^{-1}.
\end{align*}
Applying Lemma 1, for $0 < q < 1$,
\begin{align*}
\frac{\Mt( \Delta^{-1}/2 : 1/2, 1 ) }{2/\alpha} 
  = \frac{\alpha^2\Delta^2}{4}\Big( 1 - \frac{\alpha^2\Delta^2}{4} 
  + q \frac{\alpha^4\Delta^4}{16}  \Big) 
  = \frac{\alpha^2\Delta^2}{4} - \frac{\alpha^4\Delta^4}{16} + O(\alpha^6\Delta^6).
\end{align*}

\end{proof}

\begin{theorem}\label{specden_3o2_app}
\begin{align*}
\frac{M_\Delta( 0 \,;\, 3/2, 1 )}{4/\alpha} &= 
    1 + O(\alpha^4 \Delta^4) 
&\frac{M_\Delta\big( \frac{1}{2\Delta} \,;\, 3/2, 1 \big)}{4/\alpha} &=
    \frac{\alpha^4\Delta^4}{48} + O(\alpha^6 \Delta^6)  \\ 
\frac{\Mt_\Delta( 0 \,;\, 3/2, 1 )}{4/\alpha} &= 1 
&\frac{\Mt_\Delta\big( \frac{1}{2\Delta} \,;\, 3/2, 1 \big)}{4/\alpha} &=
    \frac{\alpha^4\Delta^4}{16} + O(\alpha^6\Delta^6).
\end{align*}
\end{theorem}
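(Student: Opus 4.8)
The plan is to handle the four relations in two groups: the two statements about the true aliased Mat\'ern density $M_\Delta(\cdot\,;\,3/2,1)$ in \eqref{specden_32_1}, and the two about the SPDE density $\Mt_\Delta(\cdot\,;\,3/2,1)$. The argument for the true density mirrors the proof of Theorem~\ref{specden_1o2_app}: substitute the frequency, isolate the dominant summand, factor the remaining terms so each denominator has the form $1+(\text{small})$, and apply Lemma~1 (the second part, since $\nu=3/2$ produces a square in the denominator) to expand. The SPDE relations are simpler because at the two frequencies in question the complex exponentials collapse to real constants.

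For the first relation, substituting $\omega=0$ into \eqref{specden_32_1} and scaling by $\alpha/4$ gives $\alpha^4\sum_{k\in\Z}(\alpha^2+4\pi^2 k^2/\Delta^2)^{-2}$, whose $k=0$ term contributes exactly $1$. For $k\neq 0$ I would factor out $(4\pi^2 k^2/\Delta^2)^2$, leaving
\begin{align*}
\frac{\alpha^4\Delta^4}{8\pi^4}\sum_{k=1}^\infty \frac{1}{k^4}\Big(1+\frac{\alpha^2\Delta^2}{4\pi^2 k^2}\Big)^{-2},
\end{align*}
and Lemma~1 together with $\sum_{k\ge1}k^{-4}=\pi^4/90$ turns this into $\alpha^4\Delta^4/720+O(\alpha^6\Delta^6)$, which is $O(\alpha^4\Delta^4)$ as claimed. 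For the second relation, substituting $\omega=1/(2\Delta)$ replaces $k/\Delta$ by $(2k+1)/(2\Delta)$, so after scaling the denominators become $\alpha^2+\pi^2(2k+1)^2/\Delta^2$. Factoring out $\pi^4(2k+1)^4/\Delta^4$ and using that the odd integers are symmetric about $0$ gives the leading term $\frac{2\alpha^4\Delta^4}{\pi^4}\sum_{k\ge0}(2k+1)^{-4}$; the closed form $\sum_{k\ge0}(2k+1)^{-4}=\pi^4/96$ then yields $\alpha^4\Delta^4/48$, with the Lemma~1 remainder contributing only at order $\alpha^6\Delta^6$.

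For the third and fourth relations I would substitute directly into the SPDE covariance spectral density $\Mt_\Delta(\omega\,;\,3/2,1)=\alpha\Delta^2\big(\tfrac{\alpha\Delta}{2}+\tfrac{1}{\alpha\Delta}-\tfrac{1}{2\alpha\Delta}e^{-i\omega2\pi\Delta}-\tfrac{1}{2\alpha\Delta}e^{+i\omega2\pi\Delta}\big)^{-2}$. At $\omega=0$ the bracket collapses to $\alpha\Delta/2$, so the density equals $4/\alpha$ exactly and the third relation holds with no error term. At $\omega=1/(2\Delta)$ we have $e^{\pm i\pi}=-1$, so the bracket becomes $\tfrac{\alpha\Delta}{2}+\tfrac{2}{\alpha\Delta}=\tfrac{2}{\alpha\Delta}\big(1+\tfrac{\alpha^2\Delta^2}{4}\big)$; squaring and scaling by $\alpha/4$ gives $\frac{\alpha^4\Delta^4}{16}\big(1+\tfrac{\alpha^2\Delta^2}{4}\big)^{-2}$, and a single application of Lemma~1 (second part) produces $\alpha^4\Delta^4/16+O(\alpha^6\Delta^6)$.

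The computations are routine once set up, so care is needed in only two places. First, the tail of each infinite series must be controlled so that the Lemma~1 remainders genuinely collect into the stated error orders; since each tail is dominated by a convergent $p$-series with $p=4$ (or larger for the remainder term), this is straightforward but should be made explicit. Second — and this is the conceptual heart of the theorem — the factor-of-three gap between the true coefficient $1/48$ and the SPDE coefficient $1/16$ at the Nyquist frequency traces back precisely to the difference between the odd-index sum $\pi^4/96$, which arises from aliasing the squared $\nu=1/2$ density, and the value $1/16$ produced by the SPDE's squaring of an already-aliased density. Highlighting this correspondence is what ties the four computations to the paper's main claim.
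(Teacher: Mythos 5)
Your proposal is correct and takes essentially the same approach as the paper's own proof: isolate the dominant summand, rewrite the remaining terms in the form $(1+x)^{-2}$, apply the paper's Lemma~1 with its uniform bound on the remainder, and evaluate the series $\sum_{k\ge 1}k^{-4}=\pi^4/90$ and $\sum_{k\ge 0}(2k+1)^{-4}=\pi^4/96$, while the two SPDE relations reduce to substituting $e^{\pm i\pi}=-1$ and one further application of the lemma. The only differences are cosmetic: you state the series closed forms and the tail control explicitly, which the paper subsumes under ``evaluating the sums.''
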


\begin{proof}
We establish each of the four relations in turn. Rearranging terms gives
\begin{align*}
\frac{M_\Delta(0 : 3/2, 1)}{4/\alpha} = 
    \alpha^4 \sum_{k \in \Z} \Big( \alpha^2 + 4\pi^2k^2/\Delta^2 \Big)^{-2} 
  = 1 + 
    \frac{\alpha^4 \Delta^4}{8\pi^4} 
    \sum_{k=1}^{\infty} \frac{1}{k^4} 
    \Big( 1 + \frac{\alpha^2 \Delta^2}{4 \pi^2 k^2} \Big)^{-2}.
\end{align*}
Applying Lemma 1,
\begin{align*}
\frac{M_\Delta(0 : 3/2, 1)}{4/\alpha} 
 = 1 + 
    \frac{\alpha^4 \Delta^4}{8\pi^4} 
    \sum_{k=1}^{\infty} \frac{1}{k^4} 
    \Big( 1 - \frac{\alpha^2 \Delta^2}{2 \pi^2 k^2} 
     + \frac{\alpha^4 \Delta^4}{16 \pi^4 k^4} q_k 
    \Big).
\end{align*}
where $0 < q_k < 3$. Evaluating the sums gives
\begin{align*}
\frac{M_\Delta(0 : 3/2, 1)}{2/\alpha} 
 = 1 + 
    \frac{\alpha^4 \Delta^4}{720} 
  - \frac{\alpha^6 \Delta^6}{15120}
  + O(\alpha^8 \Delta^8).
\end{align*}

The second relation follows directly from plugging $\omega = 0$ into the 
SPDE spectral density. For the third relation, rearranging terms gives
\begin{align*}
\frac{M_\Delta( \Delta^{-1}/2 : 3/2, 1 )}{4/\alpha} 
    &= \alpha^4 \sum_{k \in \Z} 
      \Big[ \alpha^2 + 4\pi^2 \Big(\frac{1}{2\Delta} + \frac{k}{\Delta} \Big)^2 \Big]^{-2}\\
    &= \frac{2\alpha^4\Delta^4}{\pi^4} 
      \sum_{k=0}^\infty 
      \frac{1}{(2k + 1)^4}
       \Big[ 1 + \frac{\alpha^2 \Delta^2}{\pi^2} \frac{1}{(2k + 1 )^2} \Big]^{-2}.
\end{align*}
Applying Lemma 1,
\begin{align*}
\frac{M_\Delta( \Delta^{-1}/2 : 3/2, 1 )}{4/\alpha} 
    &= \frac{2\alpha^4\Delta^4}{\pi^4} 
      \sum_{k=0}^\infty 
      \frac{1}{(2k + 1)^4}
       \Big[ 1 - \frac{2\alpha^2 \Delta^2}{\pi^2} \frac{1}{(2k + 1 )^2} 
           + p_k \frac{\alpha^4\Delta^4}{\pi^4}\frac{1}{(2k+1)^4} \Big]
\end{align*}
where $0 < p_k < 3$. Evaluating the sums,
\begin{align*}
\frac{M_\Delta( \Delta^{-1}/2 : 3/2, 1 )}{4/\alpha} 
  = \frac{\alpha^4\Delta^4}{48}
  - \frac{\alpha^6\Delta^6}{240}
  + O(\alpha^8 \Delta^8),
\end{align*}
establishing the third relation. For the fourth relation,
\begin{align*}
\frac{\Mt( \Delta^{-1}/2 : 3/2, 1 ) }{4/\alpha} 
  = \frac{\alpha^2 \Delta^2}{4}\Big( \frac{\alpha\Delta}{2} + \frac{2}{\alpha\Delta} \Big)^{-2}
  = \frac{\alpha^4\Delta^4}{16}\Big( 1 + \frac{\alpha^2\Delta^2}{4} \Big)^{-2}.
\end{align*}
Applying Lemma 1, for $0 < q < 3$,
\begin{align*}
\frac{\Mt( \Delta^{-1}/2 : 3/2, 1 ) }{4/\alpha} 
  = \frac{\alpha^4\Delta^4}{16}\Big( 1 - \frac{\alpha^2\Delta^2}{2} 
  + q \frac{\alpha^4\Delta^4}{16}  \Big) 
  = \frac{\alpha^4\Delta^4}{16} - \frac{\alpha^6\Delta^6}{32} + O(\alpha^8\Delta^8).
\end{align*}
\end{proof}

\begin{theorem}\label{theorem_2d_true_app}
\begin{align}
\frac{M_\Delta( (0,0) \,;\, 1, 2 )}{4\pi/\alpha^2} &=
 1 +  \frac{\alpha^4 \Delta^4}{258.6} + O(\alpha^6 \Delta^6)\\
\frac{ M_\Delta\big( (\frac{1}{2\Delta},0)\,;\, 1,2 \big) }{4\pi/\alpha^2} &=
  \frac{\alpha^4\Delta^4}{43.10} + O(\alpha^6\Delta^6)\\
\frac{ M_\Delta\big( (\frac{1}{2\Delta},\frac{1}{2\Delta})\,;\, 1,2 \big) }{4\pi/\alpha^2} &=
 \frac{\alpha^4\Delta^4}{86.20} + O(\alpha^6\Delta^6)
\end{align}
\end{theorem}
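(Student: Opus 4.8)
The plan is to mimic the one-dimensional arguments of Theorems \ref{specden_1o2_app} and \ref{specden_3o2_app}, but now tracking a double lattice sum over $\Z^2$. Setting $\nu = 1$ and dividing by $4\pi/\alpha^2$, each of the three quantities takes the common form
\begin{align*}
\alpha^4 \sum_{k \in \Z^2} \Big[ \alpha^2 + \tfrac{4\pi^2}{\Delta^2}\big( (\omega_1\Delta + k_1)^2 + (\omega_2\Delta + k_2)^2 \big) \Big]^{-2},
\end{align*}
evaluated at $\omega\Delta = (0,0)$, $(1/2,0)$, and $(1/2,1/2)$. For $(0,0)$ the $k=(0,0)$ summand contributes exactly $1$, and for the remaining terms I would pull $4\pi^2\|k\|^2/\Delta^2$ outside to write each as $\frac{\Delta^4}{16\pi^4\|k\|^4}\big(1 + \frac{\alpha^2\Delta^2}{4\pi^2\|k\|^2}\big)^{-2}$. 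For the two high-frequency cases there is no singular term: the shift by $1/2$ replaces $k_1$ by the odd integer $2k_1+1$, so the quadratic forms $(2k_1+1)^2+4k_2^2 \ge 1$ and $(2k_1+1)^2+(2k_2+1)^2 \ge 2$ keep every bracket bounded below, and the same factoring applies with $\|k\|^2$ replaced by these shifted forms.

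Next I would apply Lemma 1 in the form $(1+x)^{-2} = 1 - 2x + x^2 d(x)$, $0 < d(x) < 3$, to each summand. The leading contribution is then $\alpha^4\Delta^4$ times a convergent two-dimensional lattice sum: in the $(0,0)$ case, $\frac{1}{16\pi^4}S$ with $S = \sum_{k \neq (0,0)} \|k\|^{-4}$; in the $(1/2,0)$ case, $\frac{1}{\pi^4}\sum_{k}\big((2k_1+1)^2 + 4k_2^2\big)^{-2}$; and in the $(1/2,1/2)$ case, $\frac{1}{\pi^4}\sum_{k}\big((2k_1+1)^2 + (2k_2+1)^2\big)^{-2}$. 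The stated constants $258.6$, $43.10$, $86.20$ are precisely the reciprocals of these coefficients.

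The main obstacle, and the essential departure from the one-dimensional proofs, is that these lattice sums do not reduce to elementary values of $\zeta$. The origin sum is an Epstein zeta value, $S = 4\zeta(2)\beta(2) = \tfrac{2\pi^2}{3}G$ where $G$ is Catalan's constant, so its reciprocal coefficient is $16\pi^4/S = 24\pi^2/G \approx 258.6$; the two shifted sums are restricted to the odd--even and odd--odd parity classes and admit no comparably clean closed form, which is exactly why the theorem reports rounded decimals. I would evaluate them by summing the small-radius terms explicitly and bounding the tail $\sum_{\|k\| > R}\|k\|^{-4}$ by comparison with $\int_R^\infty r^{-3}\,dr$ to certify the truncation error. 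Finally, for the $O(\alpha^6\Delta^6)$ remainders I would note that the $-2x$ terms contribute a finite $\|k\|^{-6}$-weighted lattice sum (the $\alpha^6\Delta^6$ coefficient absorbed into the $O$), while the $x^2 d(x)$ terms are controlled by $d < 3$ together with the absolute convergence of $\sum \|k\|^{-6}$, making them $O(\alpha^8\Delta^8)$; the uniform summability of these tails is what guarantees the remainder is genuinely of the claimed order.
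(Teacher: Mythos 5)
Your proposal is correct and follows essentially the same route as the paper's proof: isolate the $k=(0,0)$ term (which contributes the singular part only at the origin frequency), factor each remaining summand into a lattice weight times $\big(1+x\big)^{-2}$ with $x = \alpha^2\Delta^2/(4\pi^2 \cdot \text{quadratic form})$, expand via Lemma 1, and evaluate the resulting quartic lattice sums numerically, with the remainder controlled by convergent higher-order lattice sums. One refinement of your side remark: the parity-restricted sums in fact do admit clean closed forms, since $\sum_{k}\big((2k_1+1)^2+(2k_2)^2\big)^{-2} = \pi^2 G/4$ and $\sum_{k}\big((2k_1+1)^2+(2k_2+1)^2\big)^{-2} = \pi^2 G/8$ with $G$ Catalan's constant, so the three constants are exactly $24\pi^2/G \approx 258.6$, $4\pi^2/G \approx 43.10$, and $8\pi^2/G \approx 86.20$; this does not change the structure of your (or the paper's) argument, which rests on numerical evaluation either way.
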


\begin{proof}

For the first relationship,
\begin{align*}
\frac{M_\Delta( (0,0) : 1, 2 )}{4\pi/\alpha^2} &=
  \alpha^4 \sum_{k \in \Z^2} 
  \Big( \alpha^2 + 4\pi^2k_1^2/\Delta^2 + 4\pi^2 k_2^2/\Delta^2 \Big)^{-2} \\
  &=
  1 + \alpha^4 \sum_{k \neq (0,0)} 
  \Big( \alpha^2 + 4\pi^2k_1^2/\Delta^2 + 4\pi^2 k_2^2/\Delta^2 \Big)^{-2} \\
  &= 1 + \frac{\alpha^4\Delta^4}{16\pi^4}
     \sum_{k \neq (0,0) } 
     (k_1^2 + k_2^2)^{-2}
     \bigg( 1 + \frac{\alpha^2\Delta^2}{4\pi^2} \frac{1}{k_1^2 + k_2^2} \bigg)^{-2}
\end{align*}
Applying Lemma 1,
\begin{align*}
\frac{M_\Delta( (0,0) : 1, 2 )}{4\pi/\alpha^2} &=
   1 + \frac{\alpha^4\Delta^4}{16\pi^4}
     \sum_{k \neq (0,0) } 
     (k_1^2 + k_2^2)^{-2}
     \bigg[ 1 - \frac{\alpha^2\Delta^2}{2\pi^2} \frac{1}{k_1^2 + k_2^2} 
     + q_{k_1,k_2}\frac{\alpha^4\Delta^4}{16\pi^4} \frac{1}{(k_1^2 + k_2^2)^2},
     \bigg]
\end{align*}
for $0 < q_{k_1,k_2} < 3$. Evaluating the sums numerically,
\begin{align*}
\frac{M_\Delta( (0,0) : 1, 2 )}{4\pi/\alpha^2} &=
   1 + \frac{\alpha^4\Delta^4}{258.602} + \frac{\alpha^6\Delta^6}{6603.35}
     + O(\alpha^8\Delta^8),
\end{align*}
proving the first relationship. For the second relationship,
\begin{align*}
\frac{M_\Delta( (\frac{1}{2\Delta},0) : 1, 2 )}{4\pi/\alpha^2} &=
  \alpha^4 \sum_{k \in \Z^2} 
  \Big[ \alpha^2 + 4\pi^2\Big(\frac{1}{2\Delta} + \frac{k_1}\Delta\Big)^2 
  + 4\pi^2 k_2^2/\Delta^2 \Big]^{-2} \\
  &=
  \frac{\alpha^4\Delta^4}{\pi^4} \sum_{k \in \Z^2} 
  \Big[ (2k_1 + 1)^2 + (2k_2)^2 \Big]^{-2} 
  \bigg[ 1 + 
  \frac{\alpha^2\Delta^2}{\pi^2} \frac{1}{(2k_1 + 1)^2 + (2k_2)^2} 
  \bigg]^{-2}
\end{align*}
Applying Lemma 1,
\begin{align*}
\frac{M_\Delta( (\frac{1}{2\Delta},0) : 1, 2 )}{4\pi/\alpha^2} 
&=
  \frac{\alpha^4\Delta^4}{\pi^4} \sum_{k \in \Z^2} 
  \Big[ (2k_1 + 1)^2 + (2k_2)^2 \Big]^{-2} \\
& \hspace{20mm} \bigg[ 1 - 
  \frac{2\alpha^2\Delta^2/\pi^2}{(2k_1 + 1)^2 + (2k_2)^2} 
+ q_{k_1,k_2}\frac{\alpha^4\Delta^4/\pi^4}{[(2k_1 + 1)^2 + (2k_2)^2]^2}
  \bigg]^{-2}
\end{align*}
for $0 < q_{k_1,k_2} < 3$. Evaluating the sums numerically,
\begin{align*}
\frac{M_\Delta( (\frac{1}{2\Delta},0) : 1, 2 )}{4\pi/\alpha^2} 
&=
   \frac{\alpha^4\Delta^4}{43.1003} + \frac{\alpha^6\Delta^6}{23.8950}
     + O(\alpha^8\Delta^8),
\end{align*}
proving the second relationship. For the third relationship,
\begin{align*}
\frac{M_\Delta( (\frac{1}{2\Delta},\frac{1}{2\Delta}) : 1, 2 )}{4\pi/\alpha^2} 
&=
  \alpha^4 \sum_{k \in \Z^2} 
  \Big[ 
  \alpha^2 
  + 
  4\pi^2\Big(\frac{1}{2\Delta} + \frac{k_1}\Delta\Big)^2 
  + 
  4\pi^2\Big(\frac{1}{2\Delta} + \frac{k_2}\Delta\Big)^2 
  \Big] \\
&=
  \frac{\alpha^4\Delta^4}{\pi^4} \sum_{k \in \Z^2} 
  \Big[ (2k_1 + 1)^2 + (2k_2)^2 \Big]^{-2} 
  \bigg[ 1 + 
  \frac{\alpha^2\Delta^2}{\pi^2} \frac{1}{(2k_1 + 1)^2 + (2k_2+1)^2} 
  \bigg]^{-2}
\end{align*}
Applying Lemma 1,
\begin{align*}
\frac{M_\Delta( (\frac{1}{2\Delta},\frac{1}{2\Delta}) : 1, 2 )}{4\pi/\alpha^2} 
&=
  \frac{\alpha^4\Delta^4}{\pi^4} \sum_{k \in \Z^2} 
  \Big[ (2k_1 + 1)^2 + (2k_2+1)^2 \Big]^{-2} \\
& \hspace{20mm} \bigg[ 1 - 
  \frac{2\alpha^2\Delta^2/\pi^2}{(2k_1 + 1)^2 + (2k_2+1)^2} 
+ q_{k_1,k_2}\frac{\alpha^4\Delta^4/\pi^4}{[(2k_1 + 1)^2 + (2k_2+1)^2]^2}
  \bigg]^{-2}
\end{align*}
for $0 < q_{k_1,k_2} < 3$. Evaluating the sums numerically,
\begin{align*}
\frac{M_\Delta( (\frac{1}{2\Delta},\frac{1}{2\Delta}) : 1, 2 )}{4\pi/\alpha^2} 
&=
   \frac{\alpha^4\Delta^4}{86.2007} + \frac{\alpha^6\Delta^6}{95.5799}
     + O(\alpha^8\Delta^8),
\end{align*}
proving the third relationship.
\end{proof}

\begin{theorem}\label{theorem_2d_SPDE_app}
\begin{align}
\frac{\Mt_\Delta( (0,0) \,;\, 1, 2 )}{4\pi/\alpha^2} 
&= 1  \\
\frac{\Mt_\Delta\big( (\frac{1}{2\Delta},0)\,;\, 1,2 \big) }{4\pi/\alpha^2} 
&= \frac{\alpha^4\Delta^4}{16}  + O( \alpha^6\Delta^6 )\\
\frac{\Mt_\Delta\big( (\frac{1}{2\Delta},\frac{1}{2\Delta})\,;\, 1,2 \big) }{4\pi/\alpha^2} 
&= \frac{\alpha^4\Delta^4}{64} + O( \alpha^6\Delta^6 )
\end{align}
\end{theorem}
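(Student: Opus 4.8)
The plan is to exploit the fact that, in contrast to the true aliased Mat\'ern density treated in Theorem \ref{theorem_2d_true_app}, the SPDE spectral density $\Mt_\Delta(\omega \,;\, 1, 2)$ is given in closed form as the reciprocal of a finite trigonometric polynomial, with no infinite aliasing sum. The whole proof therefore reduces to substituting the three frequencies directly, simplifying the complex exponentials, and expanding in powers of $\alpha^2\Delta^2$ by means of Lemma 1. First I would record the values of the exponentials at each frequency: at $\omega = (0,0)$ every $e^{\pm i2\pi\Delta\omega_j}$ equals $+1$; at $\omega = (\frac{1}{2\Delta},0)$ we have $2\pi\Delta\omega_1 = \pi$, so $e^{\pm i2\pi\Delta\omega_1} = -1$, while $e^{\pm i2\pi\Delta\omega_2} = +1$; and at $\omega = (\frac{1}{2\Delta},\frac{1}{2\Delta})$ all four exponentials equal $-1$.

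Next I would substitute these into the bracketed term $4 + \alpha^2\Delta^2 - e^{i2\pi\Delta\omega_1} - e^{-i2\pi\Delta\omega_1} - e^{i2\pi\Delta\omega_2} - e^{-i2\pi\Delta\omega_2}$, which collapses to $\alpha^2\Delta^2$, $4 + \alpha^2\Delta^2$, and $8 + \alpha^2\Delta^2$ at the three frequencies respectively. For $\omega = (0,0)$ this gives $\Mt_\Delta = 4\pi\alpha^2\Delta^4(\alpha^2\Delta^2)^{-2} = 4\pi/\alpha^2$ exactly, so the first relation holds with no remainder term. For the other two frequencies, dividing by $4\pi/\alpha^2$ leaves $\alpha^4\Delta^4(4+\alpha^2\Delta^2)^{-2}$ and $\alpha^4\Delta^4(8+\alpha^2\Delta^2)^{-2}$.

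Finally I would factor $4^{-2} = \tfrac{1}{16}$ (resp.\ $8^{-2} = \tfrac{1}{64}$) out of each reciprocal and apply the second estimate of Lemma 1 to $(1 + \alpha^2\Delta^2/4)^{-2}$ (resp.\ $(1 + \alpha^2\Delta^2/8)^{-2}$). This yields leading terms $\alpha^4\Delta^4/16$ and $\alpha^4\Delta^4/64$ together with the claimed $O(\alpha^6\Delta^6)$ remainders; carrying the bound $0 < d(x) < 3$ through gives the explicit next-order coefficients $-\alpha^6\Delta^6/32$ and $-\alpha^6\Delta^6/256$, which can be displayed alongside the expansions as in the earlier proofs. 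There is essentially no obstacle here. The difficulty present in Theorem \ref{theorem_2d_true_app}, namely numerically evaluating an infinite lattice sum to obtain the constants $258.6$, $43.10$, and $86.20$, is entirely absent, since the SPDE construction specifies the spectrum directly. The only point needing care is the sign bookkeeping of the exponentials at the Nyquist frequencies, and comparing the resulting denominators $16$ and $64$ with the true-model denominators $43.10$ and $86.20$ to read off the over-approximation factors $43.1/16 \approx 2.69$ and $86.2/64 \approx 1.35$ quoted in the main text.
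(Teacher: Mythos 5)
Your proposal is correct and follows essentially the same route as the paper's proof: plug each frequency into the closed-form SPDE spectral density, reduce the trigonometric bracket to $\alpha^2\Delta^2$, $4+\alpha^2\Delta^2$, and $8+\alpha^2\Delta^2$, and expand $(1+\alpha^2\Delta^2/4)^{-2}$ and $(1+\alpha^2\Delta^2/8)^{-2}$ via Lemma 1. Your explicit next-order coefficients $-\alpha^6\Delta^6/32$ and $-\alpha^6\Delta^6/256$ in fact carry the correct (negative) sign, whereas the paper's displayed expansions show $+\alpha^6\Delta^6/32$ and $+\alpha^6\Delta^6/256$, an apparent sign typo that is immaterial since those terms are absorbed into the $O(\alpha^6\Delta^6)$ remainder of the statement.
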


\begin{proof}
  The first relationship follows directly from plugging
  $\omega = (0,0)$ into the formula for the spectral density.

For the second relationship,
\begin{align*}
\frac{\Mt_\Delta( (\frac{1}{2\Delta},0) : 1, 2 )}{4\pi/\alpha^2} 
&=
\alpha^4\Delta^4 \Big( 4 + \alpha^2\Delta^2 \Big)^{-2}
= 
\frac{\alpha^4\Delta^4}{16} \Big( 1 + \frac{\alpha^2\Delta^2}{4} \Big)^{-2}.
\end{align*}
Applying Lemma 1,
\begin{align*}
\frac{\Mt_\Delta( (\frac{1}{2\Delta},0) : 1, 2 )}{4\pi/\alpha^2} 
= 
\frac{\alpha^4\Delta^4}{16} 
\Big( 
1 - \frac{\alpha^2\Delta^2}{2} + q\frac{\alpha^4\Delta^4}{16} \Big)
=
\frac{\alpha^4\Delta^4}{16} + \frac{\alpha^6\Delta^6}{32} + O(\alpha^8\Delta^8),
\end{align*}
where $0 < q < 3$, proving the second relationship. For the third relationship,
\begin{align*}
\frac{\Mt_\Delta( (\frac{1}{2\Delta},\frac{1}{2\Delta}) : 1, 2 )}{4\pi/\alpha^2} 
&=
\alpha^4\Delta^4 \Big( 8 + \alpha^2\Delta^2 \Big)^{-2}
= 
\frac{\alpha^4\Delta^4}{64} \Big( 1 + \frac{\alpha^2\Delta^2}{8} \Big)^{-2}.
\end{align*}
Applying Lemma 1,
\begin{align*}
\frac{\Mt_\Delta( (\frac{1}{2\Delta},\frac{1}{2\Delta}) : 1, 2 )}{4\pi/\alpha^2} 
= 
\frac{\alpha^4\Delta^4}{64} 
\Big( 
1 - \frac{\alpha^2\Delta^2}{4} + q\frac{\alpha^4\Delta^4}{64} \Big)
=
\frac{\alpha^4\Delta^4}{64} + \frac{\alpha^6\Delta^6}{256} + O(\alpha^8\Delta^8),
\end{align*}
where $0 < q < 3$, proving the third relationship.
\end{proof}

\bibliography{../refs}{}
\bibliographystyle{apalike}

\end{document}